\documentclass[twocolumn,10pt]{IEEEtran}
\usepackage{ifpdf, flushend,subfigure}

\ifCLASSINFOpdf
  \usepackage[pdftex]{graphicx}
  \graphicspath{{../pdf/}{../jpeg/}}
  \DeclareGraphicsExtensions{.pdf,.jpeg,.png}
\else
  \usepackage[dvips]{graphicx}
  \graphicspath{{../eps/}}
  \DeclareGraphicsExtensions{.eps}
\fi
\usepackage{graphicx}
\usepackage{epstopdf}
\graphicspath{{../}}
\usepackage{multirow}
\usepackage{float}
\usepackage[cmex10]{amsmath}
\usepackage{amsthm}
\usepackage{tabularx}

\theoremstyle{plain}
\newtheorem{lemma}{Lemma}

\theoremstyle{remark}

\usepackage {amssymb}
\usepackage{array}
\usepackage{mdwmath}
\usepackage{mdwtab}
\usepackage{eqparbox}
\usepackage{nomencl}
\usepackage{cite}
\usepackage{algorithm}
\usepackage{algorithmic}
\usepackage{makeidx}
\usepackage{ifthen}
\usepackage{subfigure}
\usepackage{caption}
\usepackage{pdflscape}
\usepackage{hyperref}
\usepackage{xcolor}
\usepackage{makecell}
\newcommand{\beq}{\begin{equation}}
\newcommand{\eeq}{\end{equation}}
\newcommand{\beqn}{\begin{eqnarray}}
\newcommand{\eeqn}{\end{eqnarray}}
\newcommand{\beqno}{\begin{eqnarray*}}
\newcommand{\eeqno}{\end{eqnarray*}}
\newcommand{\bma}{\begin{displaymath}}
\newcommand{\ema}{\end{displaymath}}
\newcommand{\bnu}{\begin{enumerate}}
\newcommand{\enu}{\end{enumerate}}
\newcommand{\bce}{\begin{center}}
\newcommand{\ece}{\end{center}}
\newcommand{\btb}{\begin{tabular}}
\newcommand{\etb}{\end{tabular}}

\usepackage{array}
\usepackage{makecell}

\usepackage{tikz}
\usetikzlibrary{arrows.meta, positioning, shapes.geometric}
\begin{document}






\title{DLMP-Based Bilevel Coordination of EV Charging and Reactive Power Support in Distribution Networks Under Uncertainty}

\title{Scenario-Free Uncertainty-Aware DLMP-Based Bilevel Coordination of EV Charging and Reactive Power Support in Distribution Networks}

\author{\IEEEauthorblockN{Arash~Baharvandi,~\IEEEmembership{Student Member,~IEEE} and~Duong~Tung~Nguyen,~\IEEEmembership{Member,~IEEE}}  
\thanks{The authors are with the School of Electrical, Computer and Energy Engineering, Arizona State University, Tempe, AZ, United States. Email: \textit\{abaharv1,~duongnt\}@asu.edu.
 }
 }


\maketitle

\begin{abstract}
This paper develops a scenario-free uncertainty-aware bilevel optimization framework for coordinated electric vehicle (EV) charging and reactive power support in distribution networks using distribution locational marginal prices (DLMPs). The upper-level EV aggregator jointly schedules active and reactive charging power to minimize charging costs, while the lower-level energy management system performs network-constrained economic dispatch and determines DLMPs subject to feeder and voltage constraints. To capture uncertainties in load demand and photovoltaic (PV) generation, a compact robust counterpart (RC) reformulation is developed that avoids the computational burden of large-scale stochastic programming and conventional robust optimization. Unlike existing robust counterpart methods that primarily assume Gaussian uncertainties, the proposed approach derives a deterministic reformulation for net-demand uncertainty modeled by a normal-minus-beta distribution, providing a more realistic representation of asymmetric load and renewable variability. An exactness lemma preserves the economic interpretation of DLMPs after KKT reformulation and Big-M linearization. EV chargers also provide reactive power support through non-unity power factor operation to improve voltage regulation. Simulation results on the IEEE 33-bus distribution system demonstrate improved voltage security, effective uncertainty-aware EV coordination, and significantly lower computational complexity than conventional stochastic and robust optimization approaches.
\end{abstract}

\begin{IEEEkeywords}
Electric vehicle charging, bilevel optimization, distribution locational marginal prices, uncertainty-aware optimization, reactive power support, distribution networks.
\end{IEEEkeywords}

\printnomenclature


\section{Introduction}

The rapid growth of electric vehicles (EVs) is significantly increasing electricity demand in distribution networks. Large-scale deployment of residential, commercial, and public charging infrastructure introduces substantial spatial and temporal variations in load, creating new operational challenges such as feeder congestion, voltage violations, increased power losses, and higher operating costs \cite{senapati2024advancing, pandey2024enhanced}. Thus, coordinated EV charging has become a key component of modern distribution system operation, enabling transportation electrification while maintaining network reliability and operational efficiency.

Advanced energy management systems (EMSs) increasingly coordinate distributed energy resources (DERs), renewable generation, and flexible demand through optimal power flow (OPF). In OPF-based operation, distribution locational marginal prices (DLMPs) provide economically meaningful price signals that reflect network congestion, voltage constraints, and operating conditions. By responding to these location-dependent prices, EV aggregators can schedule charging demand to reduce electricity procurement costs while supporting secure distribution system operation.

Recent studies have investigated DLMP- and LMP-based charging strategies, market-based coordination mechanisms, and reactive power support from EV chargers to improve voltage regulation and overall grid performance \cite{kazemtarghi2022optimal,ke2025response,song2025pricing,mazumder2020ev}.  These studies demonstrate the technical feasibility of EV-based grid services and price-responsive charging. However, most existing approaches rely on deterministic optimization or treat electricity prices as exogenous signals, limiting their ability to capture the interaction between EV aggregators and distribution system operators \cite{liu2020optimal,dolgui2024scheduling,elghanam2024optimization,han2017optimal}.

 Uncertainty in EV charging has been widely addressed using stochastic programming (SP), robust optimization (RO), and distributionally robust optimization (DRO). Stochastic approaches explicitly model uncertain EV arrivals, renewable generation, electricity prices, and charging demand \cite{chen2024optimal,fallah2020charge,luo2017stochastic }, whereas robust and distributionally robust methods improve operational reliability under uncertain demand and market conditions \cite{ren2024two,shi2022day,korolko2015robust}.  Although these methods improve scheduling performance under uncertainty, SP often suffers from scalability issues due to large scenario sets, while RO and DRO may introduce conservative decisions together with additional uncertainty-set constraints and reformulation complexity.


Recent studies have employed bilevel and hierarchical optimization frameworks to coordinate EV charging through endogenous electricity prices \cite{wang2023tri,gao2021multiagent,meng2024distributed,nasiri2023moment,patnam2020dlmp}. Although these methods effectively model the interaction between EV aggregators and distribution system operators, they generally rely on scenario-based SP or conventional RO, resulting in computationally demanding formulations. Moreover, existing robust counterpart methods primarily assume Gaussian uncertainty distributions and do not adequately represent the asymmetric uncertainty associated with photovoltaic (PV) generation.


To address these limitations, this paper develops a scenario-free uncertainty-aware bilevel optimization framework for coordinated EV charging and reactive power support in distribution networks using DLMPs. The upper-level EV aggregator jointly schedules active and reactive charging power, while the lower-level EMS performs network-constrained economic dispatch and determines endogenous DLMPs. To capture uncertainties in load demand and PV generation, a compact robust counterpart (RC) reformulation is developed that eliminates the need for large stochastic scenario sets while preserving computational tractability. Unlike existing robust counterpart methods, the proposed approach derives a deterministic reformulation for net-demand uncertainty modeled by a normal-minus-beta distribution, providing a more realistic representation of asymmetric renewable and load variability. Furthermore, an exactness lemma establishes that the relaxed uncertainty-aware power-balance constraints remain binding at optimality, preserving the economic interpretation of DLMPs after KKT reformulation and Big-M linearization. The framework also enables EV chargers to provide reactive power support through non-unity power factor operation, thereby improving voltage regulation while satisfying charging requirements. Our main contributions are summarized as follows:



\begin{itemize}

\item \textbf{\textit{Scenario-free uncertainty modeling:}} A compact robust counterpart is derived for net-demand uncertainty, modeled by a normal-minus-beta distribution. The proposed formulation captures asymmetric uncertainty without relying on large stochastic scenario sets or high-dimensional uncertainty budgets.

\item \textbf{Theoretical guarantee}: An exactness lemma proves that the relaxed uncertainty-aware power-balance constraints are binding at optimality, preserving the economic interpretation of DLMPs and establishing the exactness of the proposed deterministic reformulation.


\item \textbf{\textit{DLMP-based EV coordination:}} A bilevel optimization framework jointly schedules active and reactive EV charging based on endogenous DLMPs determined by the EMS.


\item \textbf{\textit{Grid-supportive EV operation:}} Non-unity-power-factor EV charging enables coordinated active and reactive power control, improving feeder voltage regulation.

\item \textbf{\textit{Numerical validation:}} Case studies on the IEEE 33-bus distribution system demonstrate improved voltage security and computational efficiency over conventional stochastic and robust optimization approaches.

\end{itemize}

The remainder of the paper is organized as follows: Section \ref{system model} outlines the system modeling framework, while Section \ref{sec:formu} formulates the bilevel optimization problem and the uncertainty modeling approach. Section \ref{Results} provides and discusses the simulation results. Finally, Section \ref{Conclusion} concludes the paper. 
\begin{figure}[h!]
	\centering
		\includegraphics[width=0.36\textwidth,height=0.17\textheight]{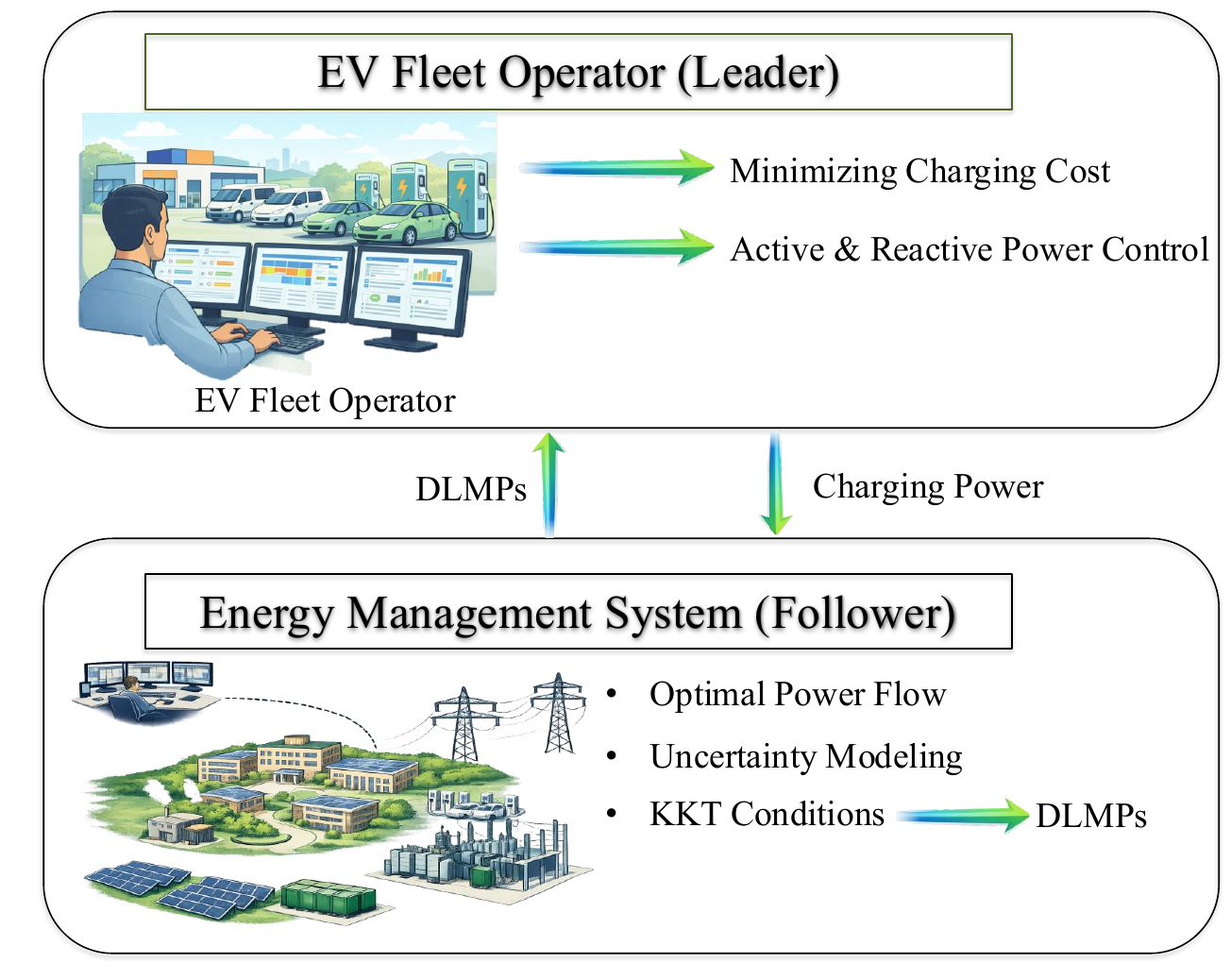}
				\caption{Illustration of the bilevel framework and the interaction between the leader and the follower}
		\label{fig:model}
\end{figure} 

\begin{table}[!h] 
\centering
\caption{Notations}
\begin{tabular}{|l|l|}
\hline
Notation   & Meaning\\
\hline	
\multicolumn{2}{|c|}{\textbf{Set and indices}}\\
\hline	
$\mathcal{I}^{b}$& Set of DGs or substations at bus $b$\\
\hline
$\mathcal{B}^{EV}$& Set of buses including charging stations 
\\
\hline	
$\mathcal{E}^{b}$, $e$ & Set of EVs at bus $b$ and index of EVs\\
\hline	
$\mathcal{T}$,$t$   & Set and index of periods \\
\hline 
$\mathcal{B},b$ & Set and index of buses\\
\hline	
$\mathcal{I}$,$i$ & Set and index of distributed generator or substation\\
\hline	
$\mathcal{L}$,$l$ & Set and index of lines\\
\hline
$s(l), r(l)$ & Sending bus and receiving bus of line $l$\\
\hline
\multicolumn{2}{|c|}{\textbf{Parameters}}\\
\hline
$\alpha_t$ & Duration of period $t$\\
\hline
$\varrho_{e,b}$ & EV charging efficiency coefficient\\
\hline
$E_{e,b}$ & Battery capacity of EV $e$ at bus $b$ \\
\hline
$S^E_{e,b}$&Socket rating of EV $e$ at bus $b$\\
\hline
$SOC^d_{e,b}$ & Desired SOC for EV $e$ at bus $b$ \\
\hline
$SOC^{min}_{e,b}$ & Minimum SOC for EV $e$ at bus $b$ \\
\hline
$SOC^{max}_{e,b}$ & Maximum SOC for EV $e$ at bus $b$ \\
\hline
$d^{p}_{b,t}$ & Forecast active demand at bus $b$ and period $t$ \\
\hline
$d^{q}_{b,t}$ & Forecast reactive demand at bus $b$ and period $t$ \\
\hline
$P^{pv}_{b,t}$ & Predicted output power of PV at bus $b$ and period $t$ \\
\hline
$R_l, ~X_l$& Resistance/Reactance of line $l$ \\
\hline
$P^{max}_{i}$ & Maximum active capacity of DG or substation $i$\\
\hline
$P^{min}_{i}$ & Minimum active capacity of DG or substation $i$\\
\hline
$Q^{max}_{i}$ & Maximum reactive capacity of DG or substation $i$\\
\hline
$Q^{min}_{i}$ & Minimum reactive capacity of DG or substation $i$\\
\hline
$P^{max}_{l}$ & Maximum active capacity of line $l$\\
\hline
$Q^{max}_{l}$ & Maximum reactive capacity of line $l$\\
\hline
$V^{min}_{b}$ & Minimum squared voltage at bus $b$ \\
\hline
$V^{max}_{b}$ & Maximum squared voltage at bus $b$ \\
\hline
$\gamma$, $\delta$, 1-$\Theta$ &Infeasibility tolerance, uncertainty and reliability level\\
\hline
\multicolumn{2}{|c|}{\textbf{Variables}}\\

\hline	
$\mu_{b,t}$ &DLMP at bus $b$ and period $t$\\
\hline	
$C_{i,t}$ &Quadratic generation cost function\\
\hline
$P^{E}_{e,b,t} $ & Charging power for EV $e$ at bus $b$ and period $t$\\
\hline
$Q^{E}_{e,b,t} $ & Injected reactive power by EV $e$ at bus $b$ and period $t$\\
\hline
$SOC_{e,b,t}$ &State of charge for EV $e$ at bus $b$ and period $t$\\
\hline
$P^f_{b,t}$ &Total active power of EVs at bus $b$ and period $t$  \\
\hline
$Q^f_{b,t}$ &Total reactive power of EVs at bus $b$ and period $t$  \\
\hline
$P^G_{i,t}$ &Output active power of DG $i$ at period $t$\\
\hline
$Q^G_{i,t}$ &Output reactive power of DG $i$ at period $t$\\
\hline
$P^L_{l,t}$ &Active power flow in line $l$ at period $t$\\
\hline
$Q^L_{l,t}$ &Reactive power flow in line $l$ at period $t$\\
\hline
$V_{b,t}$ &Squared Voltage at bus $b$ and period $t$\\
\hline

\end{tabular} 
\label{notation}
\end{table}

\section{System Model}
\label{system model}
This section presents the hierarchical architecture of the proposed uncertainty-aware bilevel framework for coordinated EV charging in distribution networks, illustrated in Fig.~\ref{fig:model}. The framework consists of two interacting layers: an \textit{EV aggregator} (leader) and a centralized EMS (follower).
At the upper level, the EV aggregator optimizes the active and reactive charging power of EVs to minimize electricity procurement costs based on DLMPs determined by the EMS.

The optimization considers charger ratings, 
battery state-of-charge (SOC) dynamics, departure SOC requirements, and inverter operating constraints. EV chargers operate at non-unity power factor, enabling simultaneous active-power charging and reactive-power injection. Since reactive power capability is limited by the inverter apparent-power rating, active and reactive power are jointly optimized to satisfy charging requirements while supporting feeder voltage regulation.

Grid-connected power electronic converters enable EV chargers to provide reactive power support through fourth-quadrant operation. Modern Level-2 and DC fast chargers can independently regulate active and reactive power during grid-to-vehicle (G2V) charging within inverter capability limits. This capability is consistent with emerging smart charging standards and grid-interactive inverter requirements, including IEEE 1547 and ISO 15118.
At the lower level, the EMS solves a network-constrained economic dispatch problem to minimize generation cost subject to network constraints. It determines generator dispatch, branch power flows, bus voltages, and the associated DLMPs, which provide economic coordination signals for the EV aggregator.

To capture uncertainties in load demand and photovoltaic (PV) generation, the proposed framework adopts a \textit{scenario-free uncertainty model} based on a \textit{normal-minus-beta distributed} net-demand representation. Unlike conventional stochastic optimization, uncertainty is incorporated through a compact deterministic reformulation that avoids large scenario sets while preserving computational tractability. As discussed in Section III, the resulting formulation preserves the economic interpretation of DLMPs through the exactness of the relaxed power-balance constraints.

Let $\mathcal{I}^{b}$ denote the set of distributed generators (DGs) or substations at bus $b$, and $\mathcal{E}^{b}$ the set of EVs connected to bus $b$, with indices $i$ and $e$, respectively. The sets of time periods, buses, DGs/substations, and distribution lines are denoted by $\mathcal{T}$, $\mathcal{B}$, $\mathcal{I}$, and $\mathcal{L}$, with corresponding indices $t$, $b$, $i$, and $l$. The set of buses equipped with EV charging stations is denoted by $\mathcal{B}^{EV}$. For each line $l$, $s(l)$, and $r(l)$ represent the sending and receiving buses, respectively, and $\alpha_t$ denotes the duration of period $t$.
For each EV at bus $b$, $\varrho_{e,b}$, $E_{e,b}$, and $S^{E}_{e,b}$ denote the charging efficiency, battery capacity, and charger rating, respectively. The desired, minimum, and maximum states of charge (SOC) are denoted by $SOC^{d}_{e,b}$, $SOC^{\min}_{e,b}$, and $SOC^{\max}_{e,b}$, respectively.
The forecast active and reactive load demands at bus $b$ and time $t$ are denoted by $d^{p}_{b,t}$ and $d^{q}_{b,t}$, respectively, while the forecast PV active power output is $P^{pv}_{b,t}$. The resistance and reactance of line $l$ are $R_l$ and $X_l$, respectively. For each DG or substation unit $i$, the active and reactive power limits are bounded by $P^{\min}_{i}$ and $P^{\max}_{i}$, $Q^{\min}_{i}$, and $Q^{\max}_{i}$, respectively. The active and reactive power flow limits of line $l$ are $P^{\max}_{l}$ and $Q^{\max}_{l}$, respectively. Bus voltage magnitudes are constrained by $V^{\min}_{b} \le V_{b,t} \le V^{\max}_{b}$.

Parameters $\gamma$, $\delta$, and $\Theta$ denote the infeasibility tolerance, uncertainty level, and reliability level, respectively. The DLMP at bus $b$ and time $t$ is denoted by $\mu_{b,t}$, and the quadratic generation cost of DG $i$ at time $t$ is represented by $C_{i,t}(\cdot)$. The active and reactive charging powers of EV $e$ at bus $b$ and time $t$ are $P^{E}_{e,b,t}$ and $Q^{E}_{e,b,t}$, respectively, while the corresponding aggregated charging powers at bus $b$ are $P^{f}_{b,t}$ and $Q^{f}_{b,t}$. The active and reactive power outputs of DG $i$ are denoted by $P^{G}_{i,t}$ and $Q^{G}_{i,t}$, respectively, and the active and reactive line power flows are represented by $P^{L}_{l,t}$ and $Q^{L}_{l,t}$. Table~\ref{notation} summarizes the notations used throughout the paper.

\section{Problem Formulation}
\label{sec:formu}
\subsection{Deterministic Formulation}
\label{deterministic}
This section presents the deterministic bilevel optimization model for coordinated EV charging. The upper-level EV aggregator optimizes active and reactive charging schedules to minimize charging cost, while the lower-level EMS solves a network-constrained economic dispatch problem to determine the corresponding DLMPs.

\subsubsection{Upper-level}
\label{Upper level}
The EV aggregator minimizes charging cost based on the DLMPs determined by the lower-level EMS. The optimization jointly determines the active and reactive charging powers of individual EVs subject to charger ratings, battery dynamics, and SOC requirements. The upper-level problem is formulated as:
\begin{subequations}
\label{DET-upper}
\begin{align}
& \min_{{\small \boldsymbol{P}^E,\boldsymbol{Q}^E,\boldsymbol{SOC}}} \sum_{b \in \mathcal{B}^{EV}} \sum_{t\in \mathcal{T}}  \sum_{e \in \mathcal{E}^{b}} \mu_{b,t} P^{E}_{e,b,t} \alpha_t 
\label{det-obj}\\
& \textit{s.t:} ~~ 
SOC_{e,b,t} = SOC_{e,b,t-1}+ \frac{\varrho_{e,b} P^{ E}_{e,b,t} \alpha_t}{E_{e,b}},~\forall e,b,t 
\label{ev:soc}\\
& (P^{E}_{e,b,t})^{2}+(Q^{E}_{e,b,t})^{2}\leq 
(S^{E}_{e,b})^2,~~~\forall e,b,t 
\label{ev:pqs}\\
&SOC_{e,b,t}\geq SOC^d_{e,b},~~~\forall e,b,t = t_{e,b}^{d} 
\label{ev:desired}\\
&SOC^{min}_{e,b}\leq SOC_{e,b,t} \leq SOC^{max}_{e,b},~~~\forall e,b,t
\label{ev:soclimits}\\
&P^{f}_{b,t}=\sum_e P^{E}_{e,b,t} ,~~~\forall b,t 
\label{ev:PF}\\
&Q^{f}_{b,t}=\sum_e Q^{E}_{e,b,t} ,~~~\forall b,t 
\label{ev:QF}\\
&P^{E}_{e,b,t}\geq 0, Q^{E}_{e,b,t}\leq 0,~~~\forall e,b,t. 
\label{ev:PQF}
\end{align}
\end{subequations}

The objective function in (\ref{det-obj}) minimizes the total charging cost over the scheduling horizon, subject to constraints (\ref{ev:soc})–(\ref{ev:PQF}). 
The formulation considers individual EV travel patterns, including arrival and departure times, initial SOC, and desired SOC. Constraint (\ref{ev:soc}) models battery dynamics by updating the SOC according to the active charging power, while (\ref{ev:pqs}) limits the apparent power of each charger. Constraint (\ref{ev:desired}) ensures that each EV reaches its desired SOC by the scheduled departure time $t^{d}_{e,b}$. Battery health is preserved by restricting SOC to remain within allowable bounds, as defined in (\ref{ev:soclimits}). The aggregated active and reactive charging powers at bus $b$ are computed by (\ref{ev:PF}) and (\ref{ev:QF}), respectively. Finally, (\ref{ev:PQF}) defines the operating region of EV inverters, allowing EVs to draw active power while injecting reactive power to support voltage regulation.

\subsubsection{Lower-level}
\label{Lower level}

The EMS solves a network-constrained economic dispatch problem that minimizes total generation cost subject to operational and network constraints, given the flexible load specified by the EV aggregator. The lower-level problem is formulated as:
\begin{subequations}
\label{DET-lower}
\begin{align}
&\min_{\small\boldsymbol{P}^G,\boldsymbol{Q}^G,\boldsymbol{P}^L,\boldsymbol{Q}^L,\boldsymbol{V}}    \sum_{i \in \mathcal{I}} \sum_{t\in \mathcal{T}}   C_{i,t}(P^G_{i,t})
\label{det-objl}\\
& \textit{s.t:} ~~  
 \sum_{i\in \mathcal{I}^{b}} P^{G}_{i,t}-\sum_{l:s(l)=b} P^L_{l,t}+\sum_{l:r(l)=b} P^L_{l,t} \nonumber\\ 
 &= P^{f}_{b,t}+d^p_{b,t}-P^{pv}_{b,t},~~\forall b \in \mathcal{B},t
 \label{eq:dsopb} ~~~(\mu_{b,t})\\
&\sum_{i\in \mathcal{I}^{b}} Q^{G}_{i,t}-\sum_{l:s(l)=b} Q^L_{l,t}+\sum_{l:r(l)=b} Q^L_{l,t} \nonumber\\ 
&=Q^{f}_{b,t}+d^q_{b,t},~~\forall b,t
\label{eq:dsoqb} ~~~(\rho_{b,t})\\
&V_{s(l),t}-V_{r(l),t}\!=\!\!2(R_l P^L_{l,t}+X_l Q^L_{l,t}),~\forall l \in \mathcal{L},t
\label{eq:dsov} ~~(\omega_{l,t})\\
&P^{min}_i \leq P^{G}_{i,t} \leq P^{max}_i,~~\forall i,t
\label{eq:dsopgl} ~~~(\lambda^{p-}_{i,t},\lambda^{p+}_{i,t})\\
&Q^{min}_i \leq Q^{G}_{i,t} \leq Q^{max}_i,~~\forall i,t
\label{eq:dsoqgl} ~~~(\lambda^{q-}_{i,t},\lambda^{q+}_{i,t})\\
&-P^{max}_{l} \leq P^{L}_{l,t} \leq P^{max}_{l},~~\forall l,t
\label{eq:dsopll}~~~(\tau^{p-}_{l,t},\tau^{p+}_{l,t})\\
&-Q^{max}_{l} \leq Q^{L}_{l,t} \leq Q^{max}_{l},~~\forall l,t
\label{eq:dsoqll}~~~(\tau^{q-}_{l,t},\tau^{q+}_{l,t})\\
&V^{min}_b\leq V_{b,t}\leq V^{max}_b,~~~\forall b,t. 
\label{eq:dsovl}~~~(\sigma^{-}_{b,t},\sigma^{+}_{b,t})
\end{align}
\end{subequations}
The EMS minimizes the generation cost in (\ref{det-objl}), with the upstream grid modeled as a high-cost source to prioritize local DGs and highlight the effects of local generation and network constraints on DLMP formation. This representation can be extended to include time-varying wholesale prices or contractual tariffs. Constraints (\ref{eq:dsopb})–(\ref{eq:dsoqb}) enforce nodal active and reactive power balance, incorporating the aggregated EV powers $P^{f}_{b,t}$ and $Q^{f}_{b,t}$ from the upper-level problem. The linearized DistFlow model, developed by Baran and Wu and further analyzed by Farivar and Low, is adopted for its efficiency and accuracy in moderately loaded radial networks \cite{Baran1989Capacitor, Farivar2013Part1}. Constraint (\ref{eq:dsov}) relates voltage drops to active and reactive flows while neglecting line losses \cite{Baran1989Capacitor,Farivar2013Part1,li5238685distributed,park2025fair}, so marginal loss components are not explicitly included in the DLMPs. Since uncertainty modeling, endogenous DLMP formation, KKT reformulation, and Big-$M$ linearization already produce a large-scale MPEC/MINLP, a full nonlinear AC OPF would significantly reduce tractability; thus, the linearized model provides a practical balance between accuracy and scalability, while extension to AC OPF is left for future work. Generator, line-flow, and voltage limits are enforced by (\ref{eq:dsopgl})–(\ref{eq:dsoqgl}), (\ref{eq:dsopll})–(\ref{eq:dsoqll}), and (\ref{eq:dsovl}), respectively. The dual variables $\mu_{b,t}$, $\rho_{b,t}$, $\omega_{l,t}$, $\lambda^{p-}_{i,t}$, $\lambda^{p+}_{i,t}$, $\lambda^{q-}_{i,t}$, $\lambda^{q+}_{i,t}$, $\tau^{p-}_{l,t}$, $\tau^{p+}_{l,t}$, $\tau^{q-}_{l,t}$, $\tau^{q+}_{l,t}$, $\sigma^{-}_{b,t}$, and $\sigma^{+}_{b,t}$ correspond to constraints (\ref{eq:dsopb})–(\ref{eq:dsovl}), respectively. In particular, the dual variable $\mu_{b,t}$ associated with the active power balance constraint (\ref{eq:dsopb}) represents the DLMP at node $b$ and time $t$.

\vspace{-0.07cm}
\subsubsection{Single-level formulation}
\label{single level}
The bilevel optimization problem is reformulated as a single-level model by incorporating the Karush--Kuhn--Tucker (KKT) conditions of the lower-level problem into the upper-level problem. Since the lower-level problem is convex, the KKT conditions are both necessary and sufficient for optimality. The resulting formulation embeds the stationarity, primal feasibility, dual feasibility, and complementary slackness conditions, yielding an MPEC that can be solved using standard optimization solvers. The KKT conditions of the lower-level problem are given by: 
\vspace{-0.16cm}
\begin{subequations}
\label{KKT}
\begin{align}
&\frac{\partial C_{i,t}}{\partial P^G_{i,t}}-\mu_{b,t}-\lambda^{p-}_{i,t}+\lambda^{p+}_{i,t}=0,~~~\forall t, b, i \in \mathcal{I}^b
\label{dc1}\\
&\mu_{s(l),t}-\mu_{r(l),t}+2 R_l \omega_{l,t}-\tau^{p-}_{l,t}+\tau^{p+}_{l,t}=0,~~~\forall l,t
\label{dc2}\\
&\rho_{s(l),t}-\rho_{r(l),t}+2 X_l \omega_{l,t}-\tau^{q-}_{l,t}+\tau^{q+}_{l,t}=0,~~~\forall l,t
\label{dc3}\\
&-\rho_{b,t}-\lambda^{q-}_{i,t}+\lambda^{q+}_{i,t}=0,~~~\forall t, b, i \in \mathcal{I}^b
\label{dc4}\\
&\sum_{l:r(l)=b} \omega_{l,t}-\sum_{l:s(l)=b} \omega_{l,t}-\sigma^{-}_{b,t}+\sigma^{+}_{b,t}=0,~~~\forall b,t
\label{dc5}\\
&\lambda^{p-}_{i,t}(P^{min}_i-P^G_{i,t})=0,~~~\forall i,t
\label{dc6}\\
&\lambda^{p+}_{i,t}(P^G_{i,t}-P^{max}_{i})=0,~~~\forall i,t
\label{dc7}\\
&\lambda^{q-}_{i,t}(Q^{min}_i-Q^G_{i,t})=0,~~~\forall i,t
\label{dc8}\\
&\lambda^{q+}_{i,t}(Q^G_{i,t}-Q^{max}_{i})=0,~~~\forall i,t
\label{dc9}\\
&\tau^{p-}_{l,t}(-P^{max}_{l}-P^{L}_{l,t})=0,~~~\forall l,t
\label{dc10}\\
&\tau^{p+}_{l,t}(P^{L}_{l,t}-P^{max}_{l})=0,~~~\forall l,t
\label{dc11}\\
&\tau^{q-}_{l,t}(-Q^{max}_{l}-Q^{L}_{l,t})=0,~~~\forall l,t
\label{dc12}\\
&\tau^{q+}_{l,t}(Q^{L}_{l,t}-Q^{max}_{l})=0,~~~\forall l,t
\label{dc13}\\
&\sigma^{-}_{b,t}(V^{min}_{b}-V_{b,t})=0,~~~\forall b,t
\label{dc14}\\
&\sigma^{+}_{b,t}(V_{b,t}-V^{max}_{b})=0,~~~\forall b,t
\label{dc15}\\
&\boldsymbol{\lambda^{p-}}\!,\! \boldsymbol{\lambda^{p+}}\!, \!\boldsymbol{\lambda^{q-}}\!,\! \boldsymbol{\lambda^{q+}}\!, \!\boldsymbol{\tau^{p-}}\!, \!\boldsymbol{\tau^{p+}}\!,\! \boldsymbol{\tau^{q-}}\!, \!\boldsymbol{\tau^{q+}}\!, \!\boldsymbol{\sigma^{-}}\!,\! \boldsymbol{\sigma^{+}} \!\! \geq 0.
\label{dc16}
\end{align}
\end{subequations}
The stationarity conditions are given by (\ref{dc1})–(\ref{dc5}), the complementary slackness conditions by (\ref{dc6})–(\ref{dc15}), and the dual feasibility conditions by (\ref{dc16}). Consequently, the deterministic bilevel problem can be reformulated as the following single-level optimization model: \begin{equation}
\label{single-level-reformulation}
\Delta^d:= \{(\ref{DET-upper}), (\ref{eq:dsopb})-(\ref{eq:dsovl}), (\ref{KKT}) \}
\end{equation}
\vspace{-1.0cm}
\subsection{Uncertainty Approach}
\label{methodology}
This section presents the uncertainty modeling framework underlying the proposed approach. The formulation builds upon the robust counterpart method in \cite{lin2004new}, \cite{janak2007new} and is subsequently extended to accommodate normal-minus-beta distributed net-demand uncertainty. Consider the following generic deterministic optimization problem:
\begin{subequations}   
\label{DET-original}
\begin{align}
&\min_{\boldsymbol{x},\boldsymbol{y}} \boldsymbol{u}^{T} \boldsymbol{x}+\boldsymbol{f}^{T} \boldsymbol{y}
\label{obj-original} \\
& \textit{s.t:} ~~\boldsymbol{H}\boldsymbol{x}+\boldsymbol{M}\boldsymbol{y}\leq \boldsymbol{d} \label{originalc}
\\
&\boldsymbol{x}_{min} \leq \boldsymbol{x} \leq \boldsymbol{x}_{max}
\label{range}\\
&y_i \in \{0,1\},~~~\forall i. \label{binary range}
\end{align}
\end{subequations}

\noindent
In this formulation, $\boldsymbol{x}$ and $\boldsymbol{y}$ denote the decision variable vectors. The model parameters consist of vectors $\boldsymbol{u}$, $\boldsymbol{f}$, $\boldsymbol{d}$, $\boldsymbol{x}_{min}$, and $\boldsymbol{x}_{max}$, as well as matrices $\boldsymbol{H}$ and $\boldsymbol{M}$ of appropriate dimensions. Let index $i$ denote the elements of vector $\boldsymbol{y}$. In practice, the parameters $\boldsymbol{d}$, $\boldsymbol{H}$, and $\boldsymbol{M}$ are often subject to uncertainty due to variations in operating conditions and forecast errors. Assuming these parameters follow normal distributions, constraint (\ref{originalc}) can be reformulated as the following deterministic equivalent \cite{lin2004new,janak2007new}:
\begin{align}
&\sum_{j} h_{k,j} x_j+ \delta \varphi \sqrt{\sum_{j} h_{k,j}^{2} x_j^{2}+\sum_{i} m_{k,i}^{2} y_i+d_k^{2}}\label{modified} \nonumber\\ & +\sum_{i} m_{k,i} y_i\leq d_k+\gamma \max{\Big[1,|d_k|\Big]},~~~\forall k. 
\end{align}

Here, indices $k$ and $j$ refer to the elements of matrix $\boldsymbol{H}$, while $k$ and $i$ refer to those of matrix $\boldsymbol{M}$. The index $k$ also denotes the elements of vector $\boldsymbol{d}$. Parameters $h_{k,j}$, $m_{k,i}$, and $d_k$ represent the forecast values of the uncertain quantities, whose actual values are given by:
\begin{subequations}
\label{true values}
\begin{align}
&h_{k,j}^{act}=(1+\delta \chi_{k,j}) h_{k,j}
\label{true valueh}\\
&m_{k,i}^{act}=(1+\delta \chi_{k,i}) m_{k,i}
\label{true valuek}\\
&d_{k}^{act}=(1+\delta \chi_{k}) d_{k}. \label{true valuer}
\end{align}
\end{subequations}

Here, $h_{j,k}^{\mathrm{act}}$, $m_{k,i}^{\mathrm{act}}$, and $d_{k}^{\mathrm{act}}$ denote the actual realizations of the uncertain parameters. The parameters $\delta$ and $\varphi$ quantify the deviation of the actual values from their forecasts. Since the uncertain parameters are assumed to follow normal distributions, $\chi$ is modeled as a normally distributed random variable. In (\ref{modified}), $\delta$ and $\gamma$ denote the uncertainty level and the infeasibility tolerance, respectively. Larger values of $\delta$ imply greater robustness against uncertainty at the expense of increased conservatism. In contrast, larger values of $\gamma$ relax the feasibility requirements by allowing limited constraint violations, expanding the feasible region and generally reducing the objective cost. 
The deterministic reformulation in (\ref{modified}) is valid provided that
\cite{lin2004new,janak2007new}:
(i) the deterministic problem based on the forecasted parameters is feasible; and
(ii) the probability of constraint violation does not exceed a prescribed threshold, i.e.,
\begin{align}
&\Pr  \{\sum_{j} h_{k,j}^{act} x_j+\sum_{i} m_{k,i}^{act} y_i > d_k^{act} +\gamma \max{[1,|d_k|]} \}  \leq \Theta \nonumber
\end{align}
where $\varphi=F_n^{-1}(1-\Theta)$ and $1-\Theta$ determines reliability level. 
Furthermore, the relationship between $\varphi$ and $\Theta$ is given by:
\begin{align}
\label{reliability}
\Theta=1-\int\limits_{-\infty}^{\varphi} \frac{1}{\sqrt {2\pi}} e^{\sf \frac {-x^{\sf 2}} {2}}\,dx. 
\end{align}
In detailed proof is provided in \cite{janak2007new}. 


Unlike the uncertainty treatment in \cite{lin2004new,janak2007new}, which assumes that the uncertain parameter follows a single known probability distribution, the proposed framework models \textit{net-demand} uncertainty, defined as the difference between load demand and PV generation. Specifically, load demand is modeled by a normal distribution, whereas PV generation follows a beta distribution to capture its bounded and asymmetric characteristics. Consequently, the net demand is represented as the difference between two random variables with different probability distributions. Therefore, the deterministic reformulation cannot be obtained directly from the standard normal-distribution approach and requires a dedicated derivation for the resulting normal-minus-beta random variable.
To address this challenge, an equivalent deterministic reformulation of the chance constraint is derived. In (\ref{DET-original}), uncertainty is introduced only through the parameter vector $\boldsymbol{d}$, while $\boldsymbol{H}$ and $\boldsymbol{M}$ remain deterministic. This is consistent with the power-balance formulation, where uncertainty appears only on the right-hand side. Consider the following chance constraint:
\begin{align}
\Pr \{
\sum_{j} h_{k,j}x_j+\sum_{i} m_{k,i}y_i
\le \tilde d_k
\}\ge 1-\Theta 
\nonumber
\end{align}
where \(1-\Theta\) is the required reliability level and \(\Theta\) is the
allowable violation probability.
In the proposed formulation, the uncertain right-hand side is modeled as
the actual net demand $\tilde d_k =  D_k - PV_k $,
where the load demand \(D_k\) follows a normal distribution and the PV
generation \(PV_k\) follows a scaled beta distribution. Specifically,
$D_k \sim \mathcal N(\mu_{D,k},\sigma_{D,k}^2),
~
PV_k=s_k U_k,
~
U_k\sim \mathrm{Beta}(\alpha,\beta)$,
where \(D_k\) and \(U_k\) are independent. The scaling factor \(s_k\) maps
the beta-distributed variable from the normalized interval \([0,1]\) to the
physical PV-generation range. If $\mu_{PV,k}$ denotes the forecast or
expected PV generation, then
$s_k=\frac{\mu_{PV,k}}{\alpha/(\alpha+\beta)}$.
Substituting \(\tilde d_k=W_k\) into the chance constraint gives:
\begin{align}
\Pr \{
\sum_{j} h_{k,j}x_j+\sum_{i} m_{k,i}y_i
\le W_k
\}\ge 1-\Theta .
\nonumber
\end{align}
Define:
\begin{align}
z_k :=
\sum_{j} h_{k,j}x_j+\sum_{i} m_{k,i}y_i .
\label{z-definition}
\end{align}
Then the chance constraint is 
$
\Pr\{W_k \ge z_k\}\ge 1-\Theta .
$
Since \(W_k\) is continuous,
$
\Pr\{W_k \ge z_k\}
=
1-F_{W_k}(z_k),
$
where \(F_{W_k}(\cdot)\) is the cumulative distribution function (CDF) of
\(W_k\). Therefore,
$
1-F_{W_k}(z_k)\ge 1-\Theta
~~ \Longleftrightarrow ~~
F_{W_k}(z_k)\le \Theta $.
Since \(F_{W_k}(\cdot)\) is nondecreasing, the above condition is
equivalent to
$
z_k \le F_{W_k}^{-1}(\Theta).
$
Defining
$
q_{\Theta,k}:=F_{W_k}^{-1}(\Theta),
$
the deterministic equivalent of the chance constraint is obtained as:
\begin{align}
\sum_{j} h_{k,j}x_j+\sum_{i} m_{k,i}y_i
\le q_{\Theta,k}.
\label{equivalent}
\end{align}
Thus, the uncertain right-hand side is replaced by the deterministic
equivalent parameter \(q_{\Theta,k}\), which represents the
\(\Theta\)-quantile of the actual net demand \(W_k=D_k-PV_k\).
To compute \(q_{\Theta,k}\), the CDF of \(W_k\) is derived as follows:
\begin{align}
F_{W_k}(t)
=
\Pr(W_k\le t)
=
\Pr(D_k-PV_k\le t).
\nonumber
\end{align}
Since \(PV_k=s_kU_k\), we have
$
F_{W_k}(t)
=
\Pr(D_k-s_kU_k\le t).
$
Rearranging the inequality gives:
$
F_{W_k}(t)
=
\Pr(D_k\le t+s_kU_k).
$
Using the law of total probability with respect to \(U_k\), we obtain:
\begin{align}
F_{W_k}(t)
=
\int_{0}^{1}
\Pr(D_k\le t+s_ku \mid U_k=u) f_{U_k}(u)\,du .
\nonumber
\end{align}
Since \(D_k\) and \(U_k\) are independent:
\begin{align}
\Pr(D_k\le t+s_ku \mid U_k=u)
=
\Pr(D_k\le t+s_ku).
\nonumber
\end{align}
Because \(D_k\sim\mathcal N(\mu_{D,k},\sigma_{D,k}^2)\):
\begin{align}
\Pr(D_k\le t+s_ku)
=
\Phi\left(
\frac{t+s_ku-\mu_{D,k}}{\sigma_{D,k}}
\right),
\nonumber
\end{align}
where \(\Phi(\cdot)\) is the standard normal cumulative distribution
function.
Therefore:
\begin{align}
F_{W_k}(t)
=
\int_{0}^{1}
\Phi\left(
\frac{t+s_ku-\mu_{D,k}}{\sigma_{D,k}}
\right)
f_{U_k}(u)\,du .
\nonumber
\end{align}
The PDF of \(U_k\sim\mathrm{Beta}(\alpha,\beta)\) is:
\begin{align}
f_{U_k}(u)
=
\frac{
u^{\alpha-1}(1-u)^{\beta-1}
}{
B(\alpha,\beta)
},
\qquad 0\le u\le 1 .
\nonumber
\end{align}
Substituting the beta PDF yields:
\begin{align}
F_{W_k}(t)
=
\!\!\int_{0}^{1}
\!\!\!\Phi\!\left(
\frac{t+s_ku-\mu_{D,k}}{\sigma_{D,k}}
\right)
\!\frac{
u^{\alpha-1}(1-u)^{\beta-1}
}{
B(\alpha,\beta)
}
\,du .
\label{convolution}
\end{align}
The beta function is defined as
$
B(\alpha,\beta)
=
\frac{\Gamma(\alpha)\Gamma(\beta)}
{\Gamma(\alpha+\beta)},
$
where \(\Gamma(\cdot)\) denotes the gamma function.
Since the normal-minus-beta net-demand distribution does not generally admit a
closed-form inverse CDF, the deterministic equivalent parameter
\(q_{\Theta,k}\) is computed numerically from:
\begin{align}
F_{W_k}(q_{\Theta,k})=\Theta .
\label{q_k-calculation}
\end{align}
For example, the inverse CDF is computed offline using numerical quadrature together with a bisection search.
For each PV-equipped bus and time period with nonzero PV generation, $\mathbf{q}$ is precomputed offline as the \(\Theta\)-quantile of the net-demand random variable by numerically evaluating the CDF integral in \eqref{convolution} and solving \eqref{q_k-calculation}.


\begin{table}[h!]
\centering
\caption{Method Comparison}
\label{summary comparison}
\begin{tabular}{|l|c|c|c|}
\hline
\textbf{Feature} & \textbf{RO} & \textbf{SP} & \textbf{RC} \\
\hline
Handles uncertainty & $\checkmark$ & $\checkmark$ & $\checkmark$ \\
\hline
Low conservatism & $\times$ & $\checkmark$ & Tunable \\
\hline
Infeasibility handling & $\times$ & Implicit & $\checkmark$ \\
\hline
Good scalability & $\checkmark$ & $\times$ & $\checkmark$ \\
\hline
\end{tabular}
\end{table}

Table~\ref{summary comparison} briefly compares classical Robust Optimization (RO), Stochastic Programming (SP), and the proposed robust counterpart (RC) method parameterized by $(\delta, \gamma, \Theta)$.

\subsection{Uncertain Bilevel EV Scheduling}
\label{Uncertain}
To account for uncertainties in load demand and photovoltaic (PV) generation, the proposed robust counterpart (RC) method described in Section~\ref{methodology} is incorporated into the bilevel EV scheduling framework. Since the active and reactive power balance constraints (\ref{eq:dsopb})–(\ref{eq:dsoqb}) contain uncertain load and PV terms, they are reformulated using the RC approach as follows: 
\begin{subequations}
\label{uncertain-powerbalance}
\begin{align}
&\sum_{i\in \mathcal{I}^{b}} \! \! P^{G}_{i,t}-\!\!\!\sum_{l:s(l)=b} \!\!\!P^L_{l,t}+\!\!\!\sum_{l:r(l)=b} \!\!\!P^L_{l,t} \geq \!\!P^{f}_{b,t}- \nonumber\\ 
 &\! \gamma \max \{1,\!|(d^p_{b,t}\!-\!P^{pv}_{b,t})|\}\!+\Omega_{b,t},~~~~~~\forall b,t ~(\mu_{b,t})
 \label{eq:dsopbu} ~~~
\\[2pt]
&\Omega_{b,t}=
\begin{cases}
\!(d^p_{b,t}-P^{pv}_{b,t})+\delta \varphi \, d^p_{b,t},
& P^{pv}_{b,t}=0,
\\[6pt]
 q_{b,t},
& P^{pv}_{b,t}>0,
\end{cases}
\qquad \forall b,t
\label{eq:omega_bt}\\&
\sum_{i\in \mathcal{I}^{b}} Q^{G}_{i,t}-\sum_{l:s(l)=b} Q^L_{l,t}+\sum_{l:r(l)=b} Q^L_{l,t}  \geq Q^{f}_{b,t}+d^q_{b,t}\nonumber\\ 
&-\gamma \max \{1,|d^q_{b,t}|\}+\delta \varphi d^q_{b,t},~\forall b,t~(\rho_{b,t}),
\label{eq:dsoqbu}
\end{align}
\end{subequations}
where $\delta$ denotes the uncertainty level, $\gamma$ is the allowable infeasibility tolerance, and the relationship between $\varphi$ and the reliability level $(1-\Theta)$ is given by (\ref{reliability}). Equation~(\ref{eq:omega_bt}) shows that when PV output is zero, the uncertainty is modeled using the normal-distribution formulation in (\ref{modified}); otherwise, the normal-minus-beta formulation in (\ref{equivalent}) is employed. The quantile $q_{\Theta,k}$ is obtained from (\ref{convolution}) by solving:
\[
F_{W_k}(q_{\Theta,k}) = 1-\Theta,
\]
where $q_{\Theta,k}$ represents the upper-tail quantile of $W_k$ with exceedance probability $\Theta$.

Accordingly, the deterministic power-balance constraints (\ref{eq:dsopb})--(\ref{eq:dsoqb}) are replaced by the uncertainty-aware constraints (\ref{eq:dsopbu})--(\ref{eq:dsoqbu}). However, relaxing the active-power balance constraints from equalities to inequalities does not automatically preserve the economic interpretation of the associated dual variables $\mu_{b,t}$ as DLMPs. If the relaxed constraints are non-binding at the optimum, the corresponding dual variables may become zero or lose their marginal-price interpretation. Therefore, it is necessary to establish that the relaxed active-power balance constraints are binding at optimality. Under this property, the relaxed formulation is equivalent to the original equality-constrained model, preserving the validity and economic interpretation of the DLMPs. The following lemma establishes this result.


\begin{lemma}[Tightness of the Relaxed Active-Power Balance]
\label{lem:tightness_active}
Suppose that $P_i^{\min}=0$ for every controllable generator, including the upstream slack/source unit, and that each generation cost function $C_{i,t}(P^G_{i,t})$ is continuously differentiable and strictly increasing over its feasible domain. Then, every optimal solution of the relaxed lower-level problem satisfies the active-power balance constraints \eqref{eq:dsopbu} at equality for all $b\in\mathcal{B}$ and $t\in\mathcal{T}$.
\end{lemma}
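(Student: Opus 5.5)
The plan is to argue by contradiction using a local perturbation. Take an optimal solution of the relaxed lower-level problem, i.e.\ \eqref{det-objl} subject to \eqref{eq:dsopbu}--\eqref{eq:dsoqbu} and \eqref{eq:dsov}--\eqref{eq:dsovl}, and suppose that for some pair $(b,t)$ the constraint \eqref{eq:dsopbu} holds with strict inequality, with slack $s_{b,t}>0$. The goal is to build another feasible point with strictly smaller cost. Since $C_{i,t}$ is strictly increasing and continuously differentiable, it is enough to lower the active output of some generator with $P^G_{i,t}>0$ by a small amount $\epsilon>0$, while keeping every other constraint satisfied. The assumption $P^{\min}_i=0$ matters because it means a generator with positive output always has room to decrease without violating \eqref{eq:dsopgl}. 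All constraints are separable in $t$, so I would work one period at a time.

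\textbf{Step 1 (easy case).} Suppose bus $b$ hosts a generator $i\in\mathcal{I}^b$ with $P^G_{i,t}>0$. Replace $P^G_{i,t}$ by $P^G_{i,t}-\epsilon$ with $0<\epsilon\le\min\{s_{b,t},P^G_{i,t}\}$. Only the left-hand side of \eqref{eq:dsopbu} at $(b,t)$ changes, and it stays feasible. Flows, voltages and reactive quantities are untouched, yet the cost drops strictly, which contradicts optimality.

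\textbf{Step 2 (general case).} Otherwise, use radiality. Let $b=b_0,b_1,\dots,b_m$ be the unique path toward the root, stopping at the first bus $b_m$ that carries a generator with positive output. Lower the net inflow into $b$ along this path by $\epsilon$: change each $P^L_{l,t}$ on the path by $\epsilon$ in the direction that reduces power delivered toward $b$, and lower $P^G_{i,t}$ at $b_m$ by $\epsilon$. Intermediate buses see no net change in \eqref{eq:dsopbu}. Bus $b$ loses at most $\epsilon\le s_{b,t}$ of slack, and bus $b_m$ is balanced because the generator reduction offsets the change in line flow.

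The existence of a generator with positive output somewhere needs justification. If all outputs were zero, the summed left-hand sides of \eqref{eq:dsopbu} would be zero, since line flows cancel under summation. I would then argue from feasibility with nonnegative aggregate net demand that some generator must produce strictly positive power.

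\textbf{Step 3 (main obstacle).} Changing the $P^L_{l,t}$ on the path shifts voltage differences through \eqref{eq:dsov} by $2R_l\epsilon$, which may break a binding voltage bound in \eqref{eq:dsovl}. It can also break a binding line limit in \eqref{eq:dsopll}. Moving power back toward the source generally moves $P^L$ toward zero, which helps with the line limits.

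For the voltages, I would first try compensating with reactive flow. Shift each $Q^L_{l,t}$ on the path by $-(R_l/X_l)\epsilon$, so that every voltage drop, and hence every $V_{b,t}$, is unchanged. The resulting change in reactive balance is absorbed by the slack of the inequality \eqref{eq:dsoqbu} at $b$, or else by adjusting $Q^G$ at the source, where $Q^G$ has no cost in \eqref{det-objl}.

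If this compensation is blocked by binding $Q$-limits, I expect to need a mild regularity assumption, such as nondegeneracy or interior voltage and line limits at optimum. That gap is what I would check most carefully when completing the argument. With feasibility established, choose $\epsilon$ small enough that all strict inequalities are preserved, and the strict cost decrease gives the contradiction.
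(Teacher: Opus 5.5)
\textbf{There is a genuine gap, and it cannot be closed, because the lemma is false as stated.}

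Your Step 1 is exactly the paper's entire proof. The paper picks any generator $i^\star$ with $P^{G*}_{i^\star,t}>0$, lowers it by $\epsilon\le\min\{s^p_{b,t},P^{G*}_{i^\star,t}\}$, leaves everything else unchanged, and checks only the balance at bus $b$. That is valid only when $i^\star\in\mathcal{I}^b$. Otherwise the balance at the bus of $i^\star$, which may be binding, is violated. The paper's claim that slack forces some generator to have positive output is also never justified. You correctly located this hole, and it is the generic case: in the case study only buses 1 and 8 host generators.

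Your Steps 2--3 cannot close the hole, because the two conditions you flag are not technicalities. Let $r_{b,t}$ denote the right-hand side of \eqref{eq:dsopbu}. Summing \eqref{eq:dsopbu} over $b$ cancels the line flows, giving $\sum_i P^G_{i,t}=\sum_b r_{b,t}+\sum_b s^p_{b,t}$. Since $P^G\ge 0$, every feasible point has total slack at least $-\sum_b r_{b,t}$. This is positive whenever PV exceeds load, i.e., when the quantiles $q_{b,t}$ are negative. For instance, take a generator at bus 1 with $r_{1,t}=3$, a PV bus 2 with $r_{2,t}=-5$, a line $1\to2$, and loose network limits. Every optimum has $P^G_{1,t}=0$ and slacks summing to $2$. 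So nonnegative aggregate net demand is an extra hypothesis, not something you can derive from feasibility. This example also refutes the paper's existence claim.

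The voltage obstacle in your Step 3 gives a second counterexample, even when a generator has positive output. Take lines $1\to2$ and $1\to3$ and a generator only at bus 1. Set $r_{1,t}=0$, $r_{2,t}=-5$, $r_{3,t}=6$, and $X_l=0$ (or $Q$-limits tight enough to block reactive compensation). Let $R_{1\to2}=0.01$, $R_{1\to3}=0.005$, with common voltage bounds satisfying $V^{\max}-V^{\min}=0.1$.

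Serving bus 3 forces $V_{1,t}\ge V^{\min}+0.06$. Then $V_{2,t}\le V^{\max}$ forces $P^L_{1\to2}\ge-2$. Hence bus 2 has slack at least $3$ at every feasible point, although $P^G_{1,t}\ge 4>0$ and $\sum_b r_{b,t}=1$. The equality model \eqref{eq:dsopbue} is then infeasible. In this instance your rerouting increases the reverse flow on $1\to2$ and raises $V_{2,t}$. Under reverse flow, moving power back toward the source pushes $P^L$ away from zero, contrary to your remark.

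Two smaller issues:
\begin{itemize}
\item Your compensation $-(R_l/X_l)\epsilon$ leaves a reactive mismatch at every intermediate path bus whose two path lines have different $R_l/X_l$, not only at $b$ and the source.
\item The path must end at a bus hosting a positive-output generator, which need not lie between $b$ and the root.
\end{itemize}

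What you can prove is a conditional version. Assume $\sum_b r_{b,t}\ge 0$, and that the voltage limits and active line-flow limits are nonbinding at the optimum. Then your Steps 1--2 with $\epsilon$ small enough give a complete proof: route along the tree path from $b$ to a positive-output generator, leave all reactive quantities unchanged, and shift voltages accordingly beyond each modified line.
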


\begin{proof}
Let
$(\boldsymbol{P}^{G*},\boldsymbol{Q}^{G*},\boldsymbol{P}^{L*},
\boldsymbol{Q}^{L*},\boldsymbol{V}^{*})$
be an optimal solution of the relaxed lower-level problem. Suppose, for contradiction, that the active-power balance constraint \eqref{eq:dsopbu} is strict for some bus $b$ and time $t$. Define the corresponding slack as
\begin{align}
s^p_{b,t}:={}&
\sum_{i\in \mathcal{I}^{b}} P^{G*}_{i,t}
-\sum_{l:s(l)=b} P^{L*}_{l,t}
+\sum_{l:r(l)=b} P^{L*}_{l,t}
\nonumber\\
&
-\Big(
P^{f}_{b,t}
+\Omega_{b,t}
-\gamma \max\{1,|(d^p_{b,t}-P^{pv}_{b,t})|\}
\Big)
>0 .
\nonumber
\end{align}

Since the constraint is nonbinding, at least one controllable generator has a positive output. Let $i^\star$ satisfy $P^{G*}_{i^\star,t}>0$, and choose:
\[
0<\epsilon\le
\min\left\{s^p_{b,t},\,P^{G*}_{i^\star,t}\right\}.
\]
Construct a perturbed solution by setting:
\[
P^{G\prime}_{i^\star,t}
=
P^{G*}_{i^\star,t}-\epsilon,
\qquad
P^{G\prime}_{i,t}=P^{G*}_{i,t},
\ \forall i\neq i^\star,
\]
while leaving all other variables unchanged. 

Since $\epsilon\le s^p_{b,t}$, the relaxed power-balance constraint \eqref{eq:dsopbu} remains feasible. Moreover, $\epsilon\le P^{G*}_{i^\star,t}$ implies:
\[
P^{G\prime}_{i^\star,t}
\ge
P^{\min}_{i^\star}=0,
\]
so all generator limits remain satisfied. As no other variables are modified, all remaining constraints remain feasible.

Finally, because $C_{i^\star,t}(\cdot)$ is strictly increasing:
\[
C_{i^\star,t}(P^{G*}_{i^\star,t}-\epsilon)
<
C_{i^\star,t}(P^{G*}_{i^\star,t}),
\]
which strictly decreases the objective value, contradicting the optimality of
$(\boldsymbol{P}^{G*},\boldsymbol{Q}^{G*},\boldsymbol{P}^{L*},
\boldsymbol{Q}^{L*},\boldsymbol{V}^{*})$.
Therefore, constraint \eqref{eq:dsopbu} must be binding at every optimal solution.
\end{proof}


For the reactive-power balance constraints, an arbitrarily small regularization term may be added to the lower-level objective to penalize unnecessary reactive-power provision:
\[
\varepsilon \sum_{b,t}
(
\sum_{i\in\mathcal{I}^{b}} Q^G_{i,t}
+
\sum_{e\in\mathcal{E}^{b}} |Q^E_{e,b,t}|
),
\qquad
0<\varepsilon\ll1,
\]
which eliminates degenerate solutions with excess reactive-power injection. Since $\varepsilon$ is infinitesimal relative to the active-power generation cost, this regularization does not affect the optimal active-power dispatch, DLMPs, or other primal decisions. Consequently, every optimal solution minimizes unnecessary reactive-power supply, implying that the relaxed reactive-power balance constraints are also binding at optimality. Therefore, both the active- and reactive-power balance constraints can be represented as equalities, yielding:
\begin{subequations}
\label{uncertain-powerbalance-e}
\begin{align}
&\sum_{i\in \mathcal{I}^b} \! \! P^{G}_{i,t}-\!\!\!\sum_{l:s(l)=b} \!\!\!P^L_{l,t}+\!\!\!\sum_{l:r(l)=b} \!\!\!P^L_{l,t} = \!\!P^{f}_{b,t}- \nonumber\\ 
 &\! \gamma \max \{1,\!|(d^p_{b,t}\!-\!P^{pv}_{b,t})|\}\!+\Omega_{b,t},~~~~~~\forall b,t ~(\mu_{b,t})
 \label{eq:dsopbue}\\
&\sum_{i\in \mathcal{I}^b} Q^{G}_{i,t}-\sum_{l:s(l)=b} Q^L_{l,t}+\sum_{l:r(l)=b} Q^L_{l,t} 
 = Q^{f}_{b,t}+d^q_{b,t}\nonumber\\ &-\gamma \max \{1,|d^q_{b,t}|\}+\delta \varphi d^q_{b,t},~\forall b,t~(\rho_{b,t}).
\label{eq:dsoqbue}
\end{align}
\end{subequations}

To incorporate uncertainty, the upper-level EV aggregator problem remains unchanged, whereas the deterministic lower-level EMS problem is replaced by the uncertainty-aware formulation developed in Section~\ref{methodology}. This formulation accounts for uncertainty in the relevant system parameters while preserving the network and operational constraints of the original model. The resulting bilevel problem is then transformed into an equivalent single-level formulation by applying the KKT conditions described in Section~\ref{single level}. Accordingly, the uncertainty-aware single-level optimization model is formulated as follows:
\begin{equation}
\label{uncertain single-level-reformulation}
\Delta^u:=\bigg\{(\ref{DET-upper}), (\ref{eq:dsov})-(\ref{eq:dsovl}), (\ref{KKT}), \eqref{eq:omega_bt}, (\ref{eq:dsopbue})-(\ref{eq:dsoqbue})\bigg\}.
\end{equation}

The nonlinear complementary slackness conditions (\ref{dc6})--(\ref{dc15}) are linearized using the Big-$M$ technique. Specifically, a bilinear complementarity condition of the form $xz=0$ is reformulated as:
\begin{subequations}
\label{big-M}
\begin{align}
& x \le M y, ~~z \le M (1 - y), \\
& x \ge 0,\; z \ge 0, ~ y \in \{0,1\},
\end{align}
\end{subequations}
where \(M\) denotes a sufficiently large positive constant. Based on (\ref{big-M}), if \(y=1\), then \(z=0\); otherwise, \(x=0\). Therefore, the complementarity condition \(xz=0\) is enforced.

\section{Simulation Results}
\label{Results}
The simulation horizon covers 24 hours with 15-minute intervals. The proposed framework is intended for supervisory day-ahead and intra-day scheduling rather than fast real-time control. All simulations are implemented in Python~\cite{Python} and solved using Gurobi. The framework is evaluated on the IEEE 33-bus distribution test system using the network data in~\cite{santoso2018optimal}. Bus voltage magnitudes are maintained within 0.95--1.05 p.u. on a 12.66-kV base. Line thermal limits follow the test system specifications, with active and reactive power limits ranging from 1 to 6 MW and 0.8 to 4 MVAR, respectively.

Two DGs are located at buses~1 and~8, where bus~1 represents the upstream grid connection. Generation cost functions are adopted from~\cite{golshannavaz2014smart}, with maximum active/reactive capacities of 7~MW/4~MVAR for the substation and 4~MW/3~MVAR for the DG at bus~8. The quadratic generation-cost coefficient is set to zero because its contribution is negligible over the considered operating range.
 Active and reactive load demands are generated from uniform distributions $U[56,390]$~kW and $U[22,220]$~kVAR, respectively, representing typical commercial feeder conditions. EV arrival time, departure time, initial SOC, and desired SOC are generated from truncated Gaussian distributions~\cite{data1} with parameters $\mathcal{N}(48,26)$, $\mathcal{N}(68,20)$, $\mathcal{N}(0.4,0.1)$, and $\mathcal{N}(0.6,0.1)$, respectively. Each EV charger has a 12~kVA rating with 90\% charging efficiency, and the SOC is maintained between 20\% and 80\% of battery capacity. Two EV fleets, each consisting of 100 EVs with 30~kWh batteries, are located at buses~18 and~33. Solar PV units are installed at buses~11, 25, and~33, with generation profiles adopted from~\cite{zhu2016graphical} and scaled to match the operating conditions of the test system.

\subsection{Deterministic EV Scheduling}
\label{Deterministic EV Scheduling}
In this case, the bilevel model in Section~\ref{deterministic} is solved deterministically. EVs operate at non-unity power factor, enabling simultaneous active charging and reactive power support. Fig.~\ref{LMPvsperiod_DET} shows the resulting DLMPs at buses~18 and~33 over the 96 scheduling intervals. The DLMPs correspond to the dual variables $\mu_{b,t}$ of the active-power balance constraints obtained from the single-level reformulation. Their temporal variation is primarily driven by network congestion and voltage constraints. For example, at period~78, the DLMP at bus~33 is \$0.085/kWh, whereas the corresponding value at bus~18 is \$0.115/kWh, illustrating the location-dependent nature of marginal electricity prices.

Fig.~\ref{powervsperiod_DET} illustrates the active and reactive power profiles of the EV aggregator. At bus~18 (Fig.~\ref{powervsperiod_DET_17}), EVs draw active power for charging while injecting reactive power, operating in the fourth quadrant of the $P$--$Q$ plane. For instance, at period~93, the EV fleet draws 111~kW and injects 179~kVAR. Similarly, the fleet at bus~33 (Fig.~\ref{powervsperiod_DET_32}) draws 114~kW and injects 177~kVAR. The injected reactive power provides local voltage support by mitigating voltage drops along the feeder. The reported reactive power corresponds to the aggregated capability of the EV fleet at each bus.
Fig.~\ref{voltagevsperiod_DET} compares the voltage profiles at buses~18 and~33 under unity and non-unity power factor operation. As shown in Figs.~\ref{voltagevsperiod_DET_17} and~\ref{voltagevsperiod_DET_32}, the voltages remain within the allowable range of 0.95--1.05~p.u. in both cases. However, because system uncertainties are neglected, the deterministic model may yield overly optimistic operating conditions and fail to capture voltage violations that could arise in practical operation.


\vspace{-0.3cm}

\begin{figure}[h!]
	\centering
		\includegraphics[width=0.34\textwidth,height=0.12\textheight]{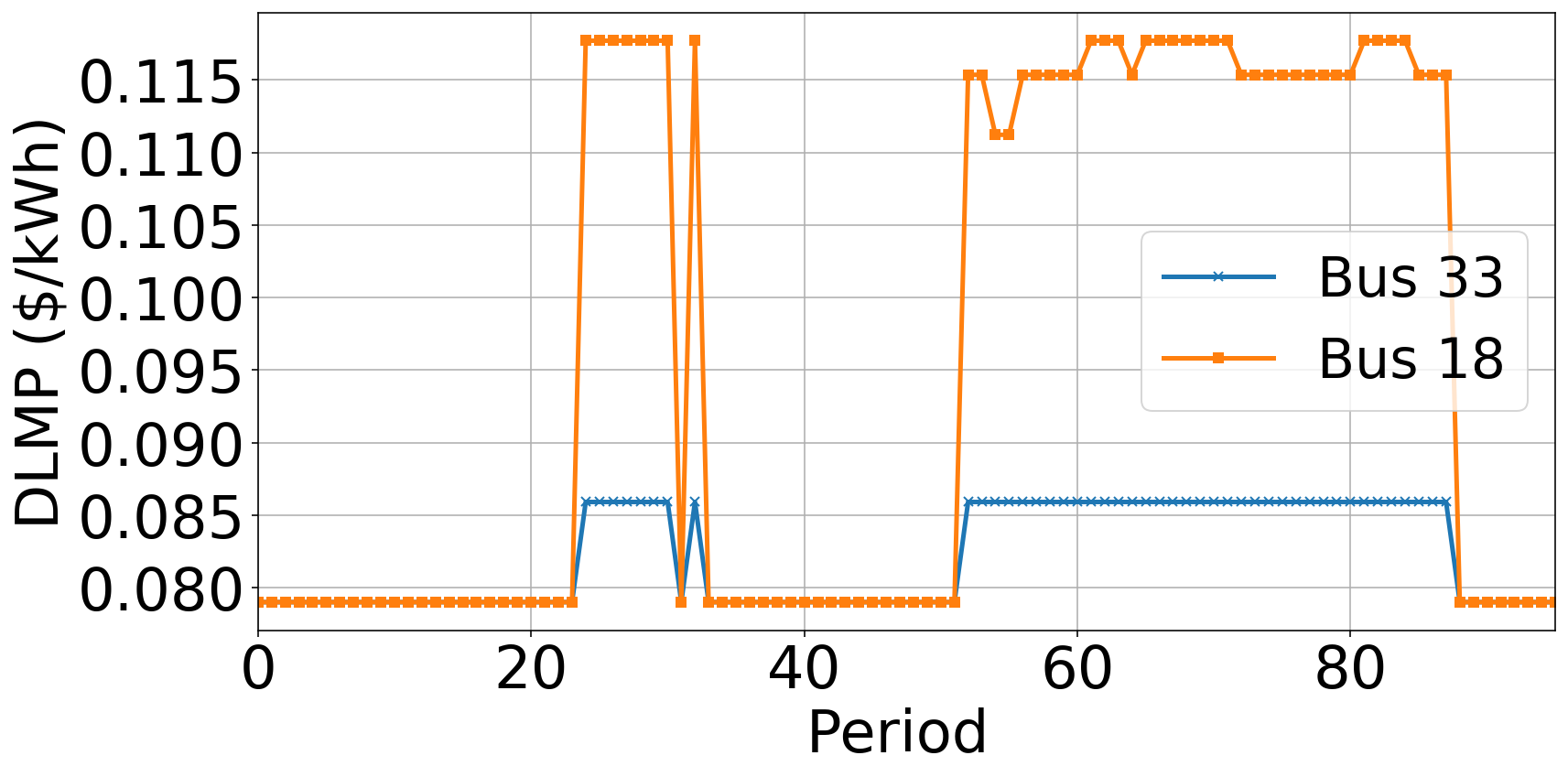}
			\caption{Temporal variation of DLMPs at buses 18 and 33}
	\label{LMPvsperiod_DET}
\end{figure}

\begin{figure}[h!]
\vspace{-0.6 cm}
\centering
        \subfigure[Flexible power at bus 18]{
	     \includegraphics[width=0.23\textwidth,height=0.11\textheight]{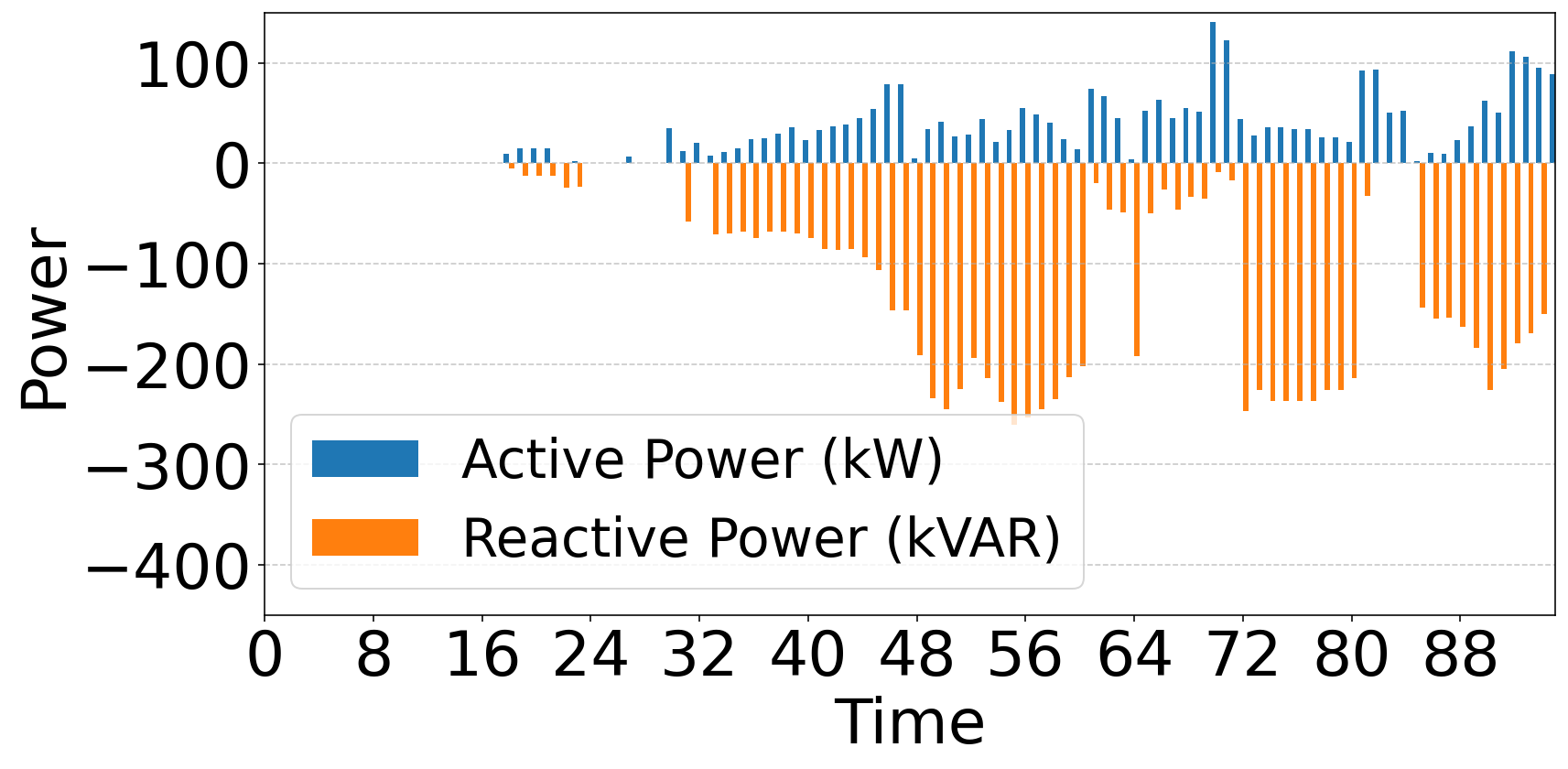}
	     \label{powervsperiod_DET_17}
	}  \hspace*{-0.58 cm} 
	     \subfigure[Flexible power at bus 33]{
	     \includegraphics[width=0.23\textwidth,height=0.11\textheight]{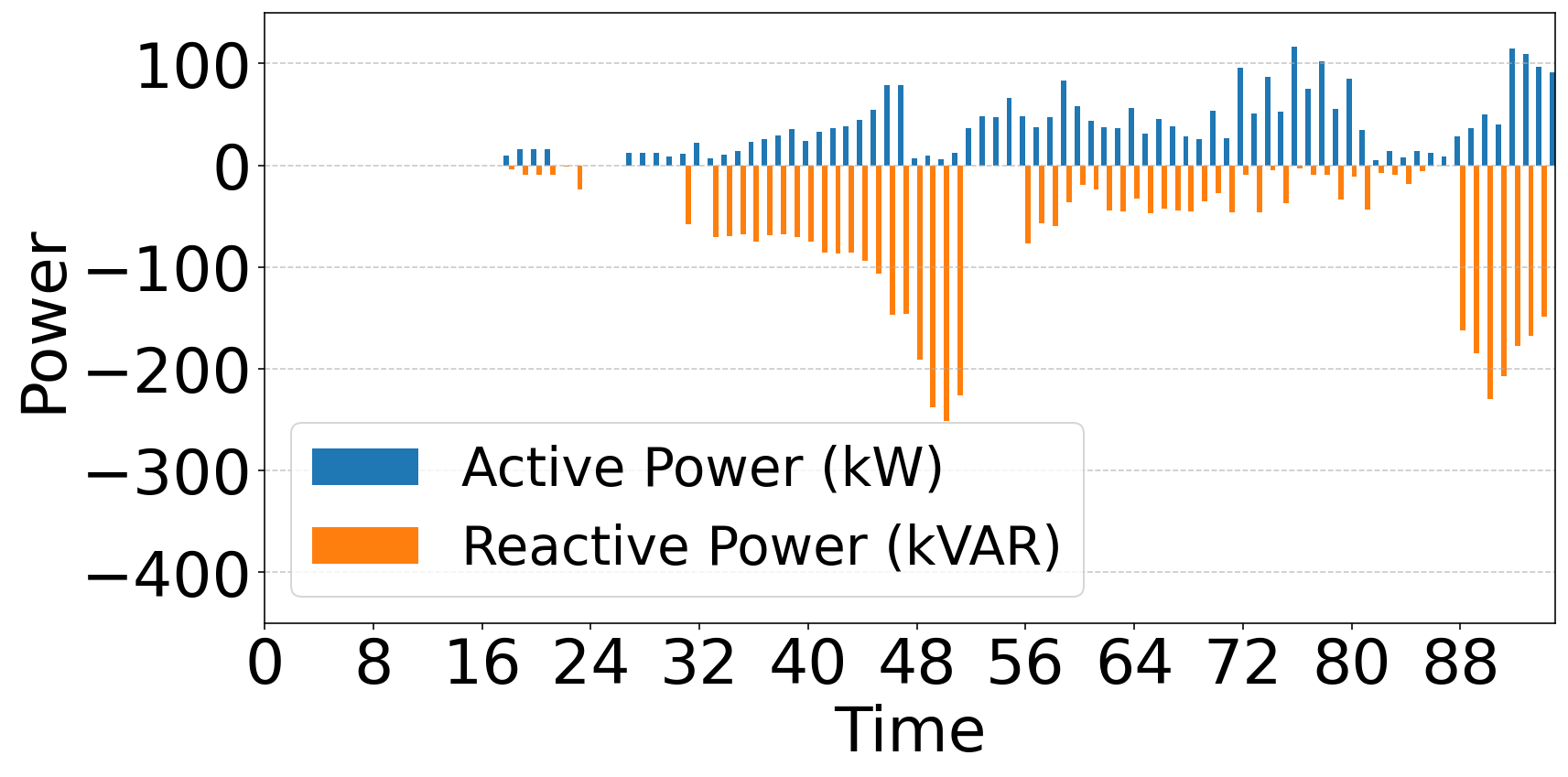}
	     \label{powervsperiod_DET_32}
	} \vspace{- 0.0cm}
			\caption{Flexible power for EVs at buses 18 and 33}
                \label{powervsperiod_DET}
                 \vspace{ 0.0cm}
\end{figure}

\begin{figure}[h!]
\vspace{-0.6 cm}
\centering
        \subfigure[Voltage profile at bus 18]{
	     \includegraphics[width=0.23\textwidth,height=0.11\textheight]{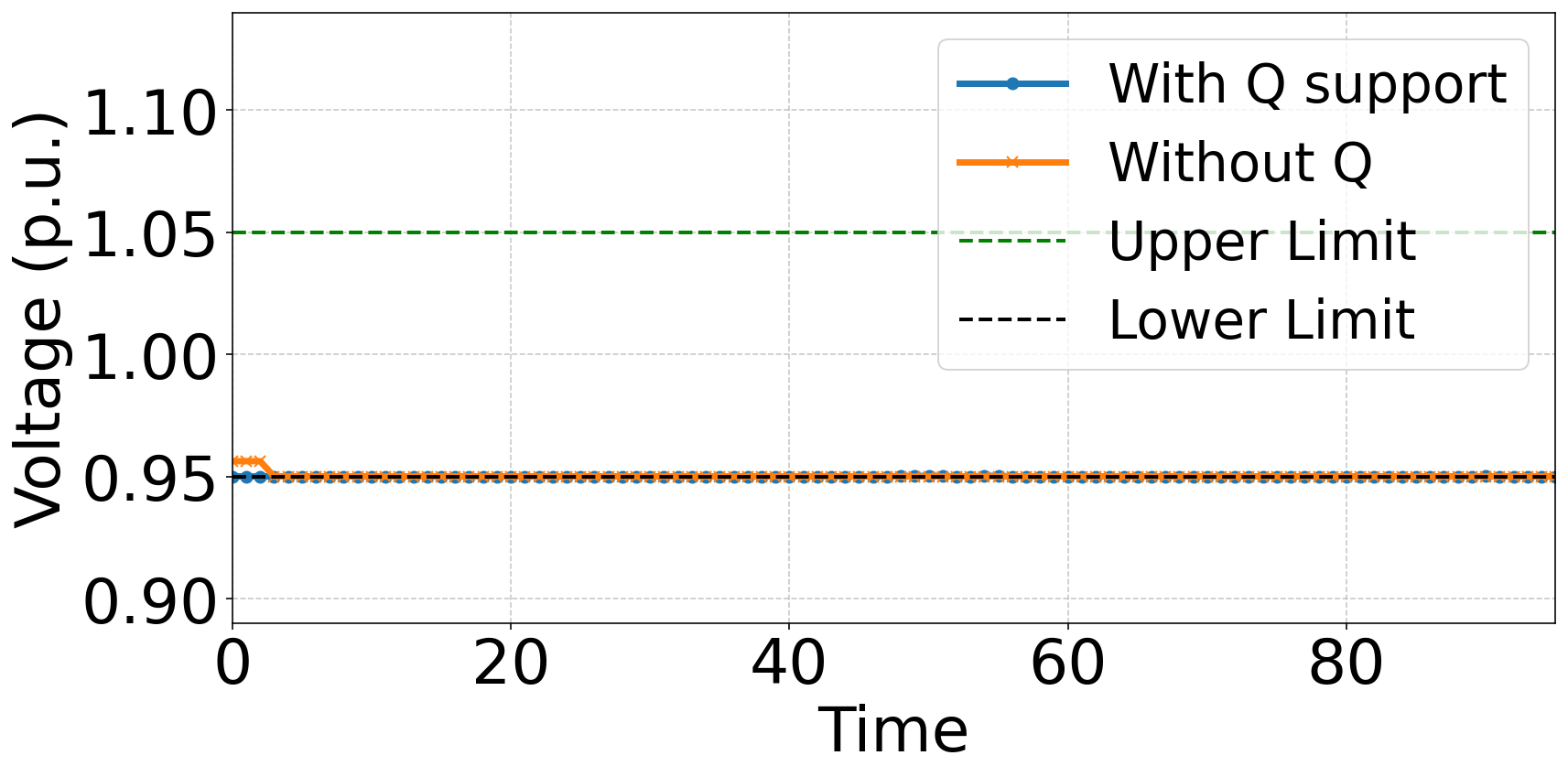}
	     \label{voltagevsperiod_DET_17}
	}  \hspace*{-0.53 cm} 
	     \subfigure[Voltage profile at bus 33]{
	     \includegraphics[width=0.23\textwidth,height=0.11\textheight]{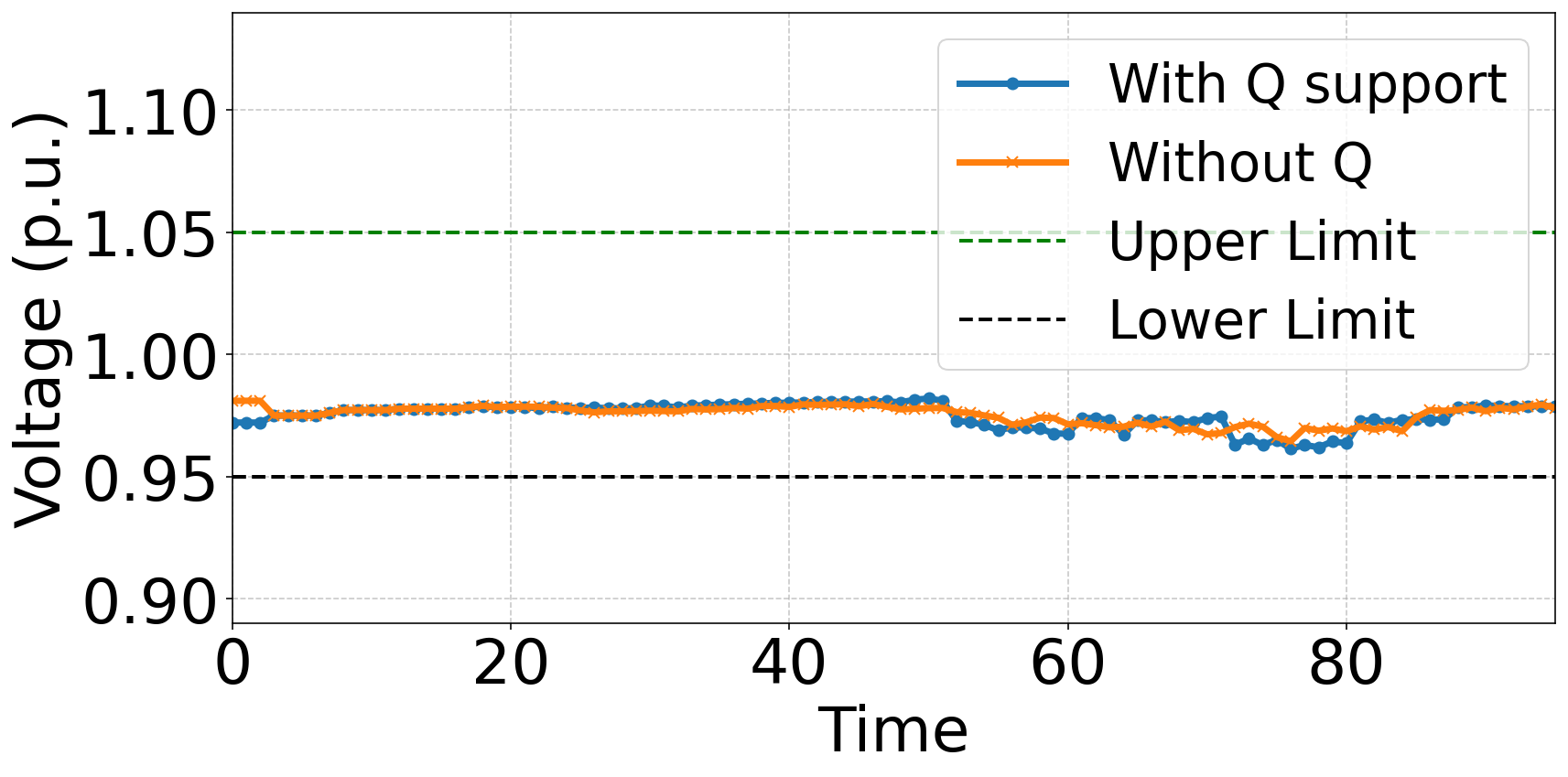}
	     \label{voltagevsperiod_DET_32}
	} \vspace{- 0.0cm}
			\caption{Voltage profiles at buses 18 and 33 }
                \label{voltagevsperiod_DET}
                 \vspace{- 0.5cm}
\end{figure} 

\subsection{Uncertain EV Scheduling}
\label{Uncertain EV Scheduling}
In this case, the uncertainty-aware framework described in Section~\ref{Uncertain} is applied to evaluate the impact of load and PV generation uncertainty on EV scheduling. The uncertainty parameters are set to $\delta=0.18$, $\varphi=1.5$ (corresponding to a 93\% reliability level), and $\gamma=0$; the same reliability level is used in \eqref{q_k-calculation}. Setting $\gamma=0$ enforces strict feasibility, although the framework readily accommodates controlled infeasibility relaxation when desired by system operators. Fig.~\ref{LMPvsperiod_Unc} shows the resulting DLMPs at buses~18 and~33. Compared with the deterministic case, the DLMPs increase due to the additional conservatism required to accommodate load and PV uncertainty. In general, larger values of $\delta$ and $\varphi$ lead to higher DLMPs and greater reactive power support. At period~46, the DLMP increases from \$0.079/kWh to \$0.085/kWh at bus~33 and from \$0.079/kWh to \$0.117/kWh at bus~18. 


Fig.~\ref{powervsperiod_Unc} illustrates the active and reactive power behavior of EVs under uncertainty. At period~81, the EV fleet at bus~18 draws 64~kW of active power while injecting 107~kVAR of reactive power (Fig.~\ref{powervsperiod_Unc_17}), whereas the EVs at bus~33 draw 55~kW and inject 62~kVAR (Fig.~\ref{powervsperiod_Unc_32}).  
These results demonstrate that uncertainty-aware scheduling leads to more conservative operational decisions, characterized by elevated DLMPs and required reactive power contributions from EVs. These adjustments improve voltage regulation and enhance the operational resilience of the distribution feeder under load and PV generation uncertainty. All EV departure SOC requirements remained satisfied throughout the scheduling horizon despite the provision of reactive power support.

Fig.~\ref{voltagevsperiod_Unc} compares the voltage profiles at buses~18 and~33 under unity and non-unity power factor operation. At bus~18, unity-power-factor charging provides no reactive power support, causing the voltage to fall below the 0.95~p.u. limit and reach approximately 0.90~p.u. during several periods. In contrast, reactive power injection through fourth-quadrant operation maintains the voltage within the permissible range, as shown in Fig.~\ref{voltagevsperiod_Unc_17}. A similar trend is observed at bus~33 (Fig.~\ref{voltagevsperiod_Unc_32}), where reactive power support mitigates voltage drops and maintains the voltage within the 0.95--1.05~p.u. limits. Furthermore, incorporating uncertainty increases the objective value by approximately 10\%. 



These results demonstrate that reactive power support from EVs significantly improves voltage regulation under uncertainty. In the deterministic case, voltage magnitudes remain within the allowable range even without reactive support, potentially leading to overly optimistic operating conditions. However, when load and PV uncertainties are considered, significant voltage drops occur, highlighting the practical need for grid-supportive EV operation.

From an operational perspective, uncertainty-aware scheduling encourages EV aggregators to provide greater reactive power support during periods of system stress. This coordinated response mitigates voltage deviations, enhances feeder resilience, and reduces reliance on costly network reinforcements or dedicated voltage control devices, demonstrating the potential of EV fleets as cost-effective flexibility resources for future distribution systems.

\subsection{Model Comparisons}
To evaluate the computational tractability of the proposed reformulation, conventional scenario-based stochastic programming (SP) and uncertainty-set-based robust optimization (RO) models were also implemented. After KKT reformulation and Big-$M$ linearization, all uncertainty-aware bilevel formulations become large-scale MPEC/MINLP problems. In the SP formulation, each scenario introduces additional primal, dual, and complementarity variables, whereas the RO formulation requires additional uncertainty-set constraints and dual variables. For a fair comparison, all models were implemented using the same solver environment, parameter settings, and Big-$M$ values.

\begin{figure}[h!]
	\centering
		\includegraphics[width=0.34\textwidth,height=0.12\textheight]{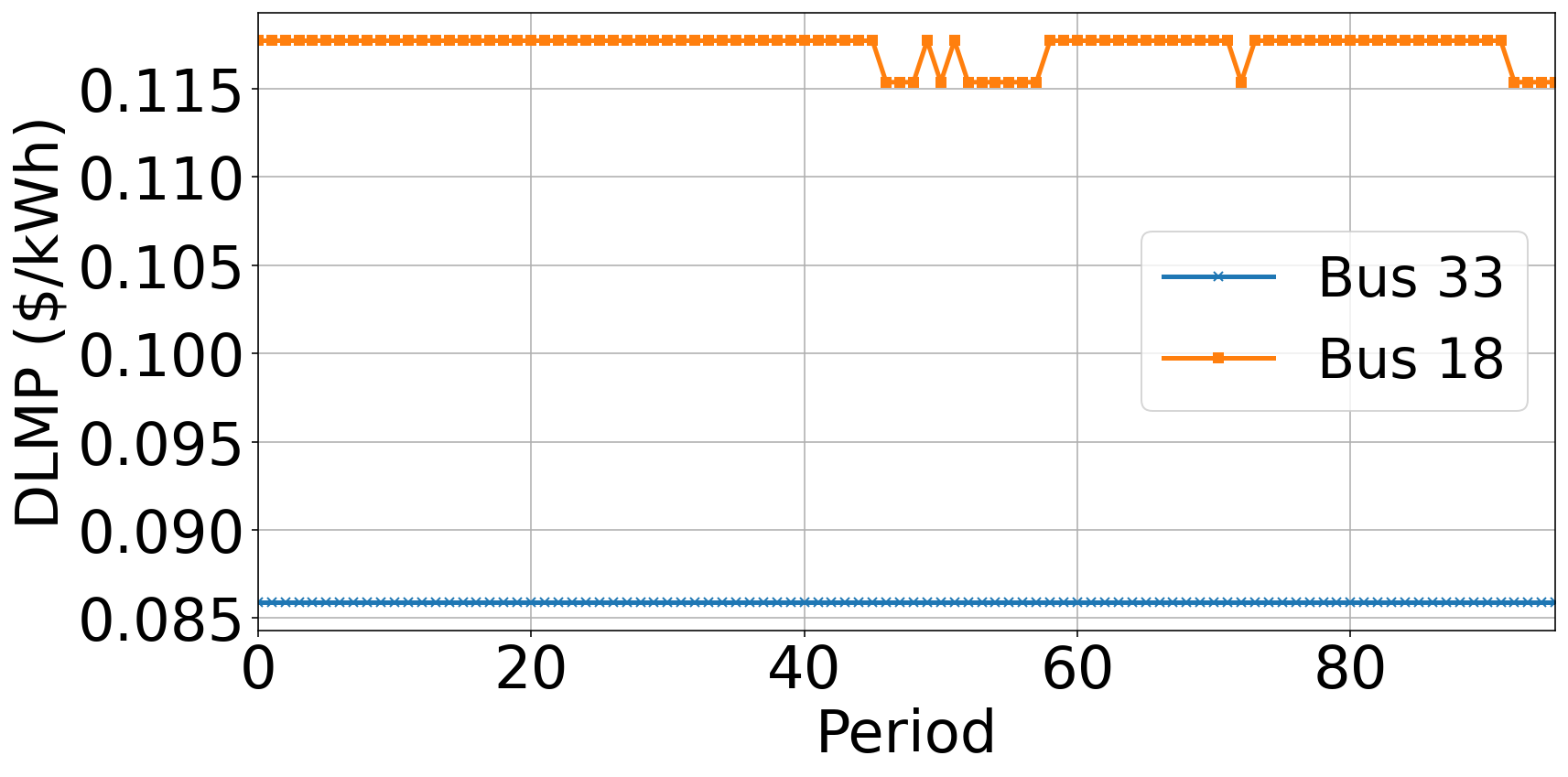}
			\caption{DLMP profiles at buses 18 and 33 under uncertainty}
	    \label{LMPvsperiod_Unc}
	\end{figure}

\vspace{-0.8cm}

\begin{figure}[h!]
\vspace{0.0 cm}
\centering
        \subfigure[Flexible power at bus 18]{
	     \includegraphics[width=0.23\textwidth,height=0.11\textheight]{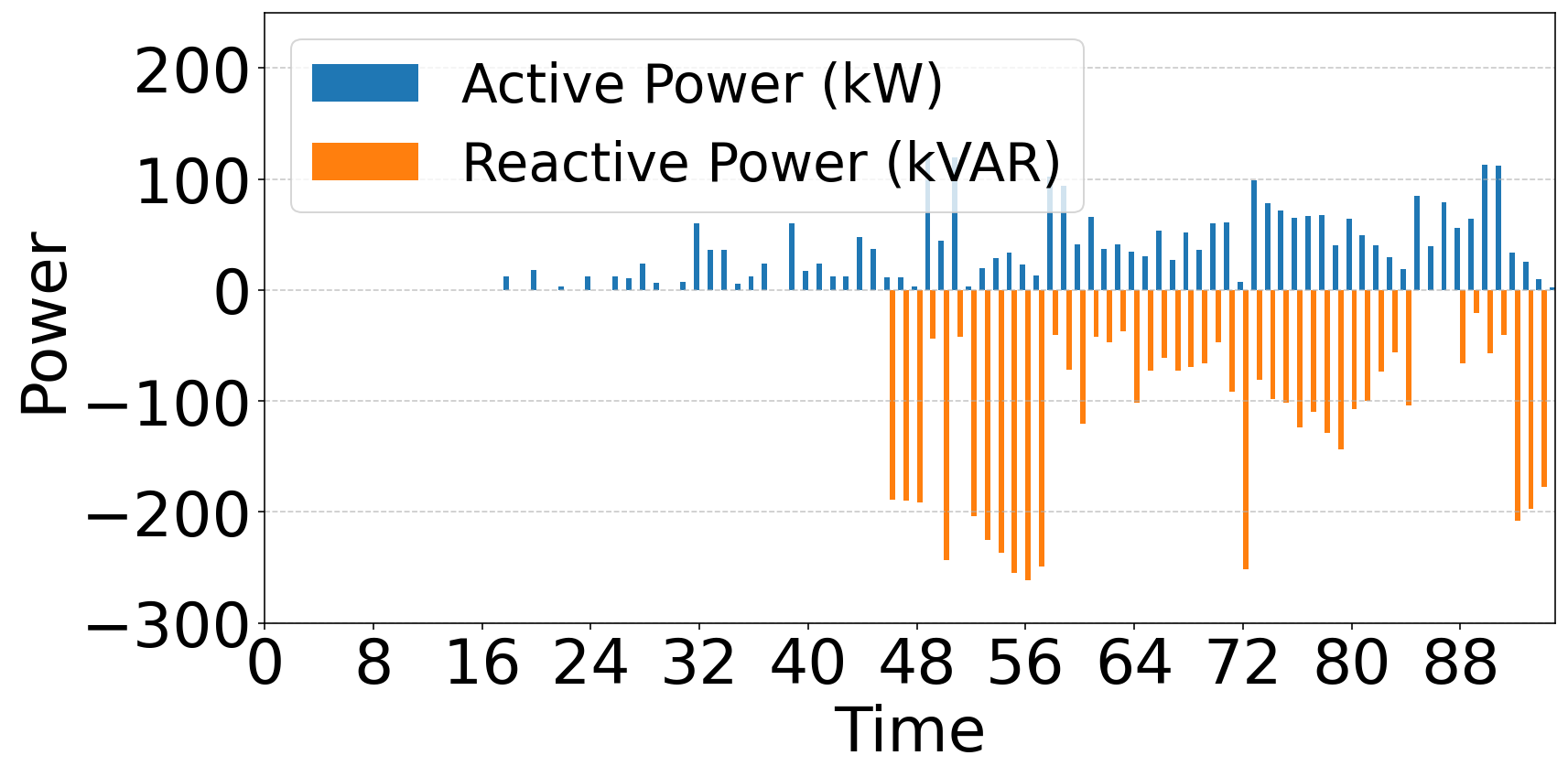}
	     \label{powervsperiod_Unc_17}
	}  \hspace*{-0.58 cm} 
	     \subfigure[Flexible power at bus 33]{
	     \includegraphics[width=0.23\textwidth,height=0.11\textheight]{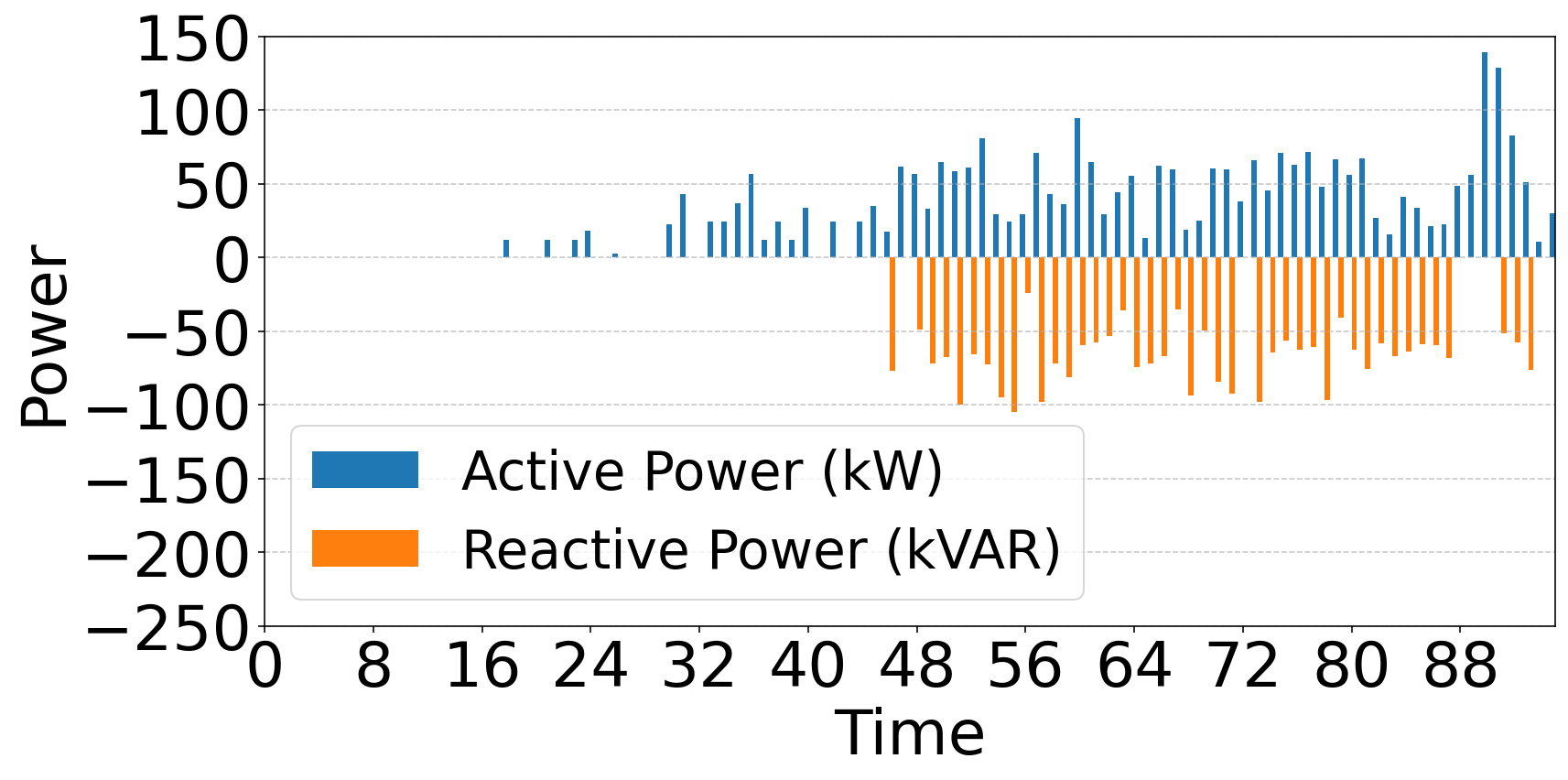}
	     \label{powervsperiod_Unc_32}
	} \vspace{- 0.0cm}
			\caption{EV power at buses 18 and 33 under uncertainty}
                \label{powervsperiod_Unc}
                 \vspace{ 0.0cm}
\end{figure}
\vspace{ -0.8cm}
\begin{figure}[h!]
\vspace{0.0 cm}
\centering
        \subfigure[Voltage profile at bus 18]{
	     \includegraphics[width=0.23\textwidth,height=0.11\textheight]{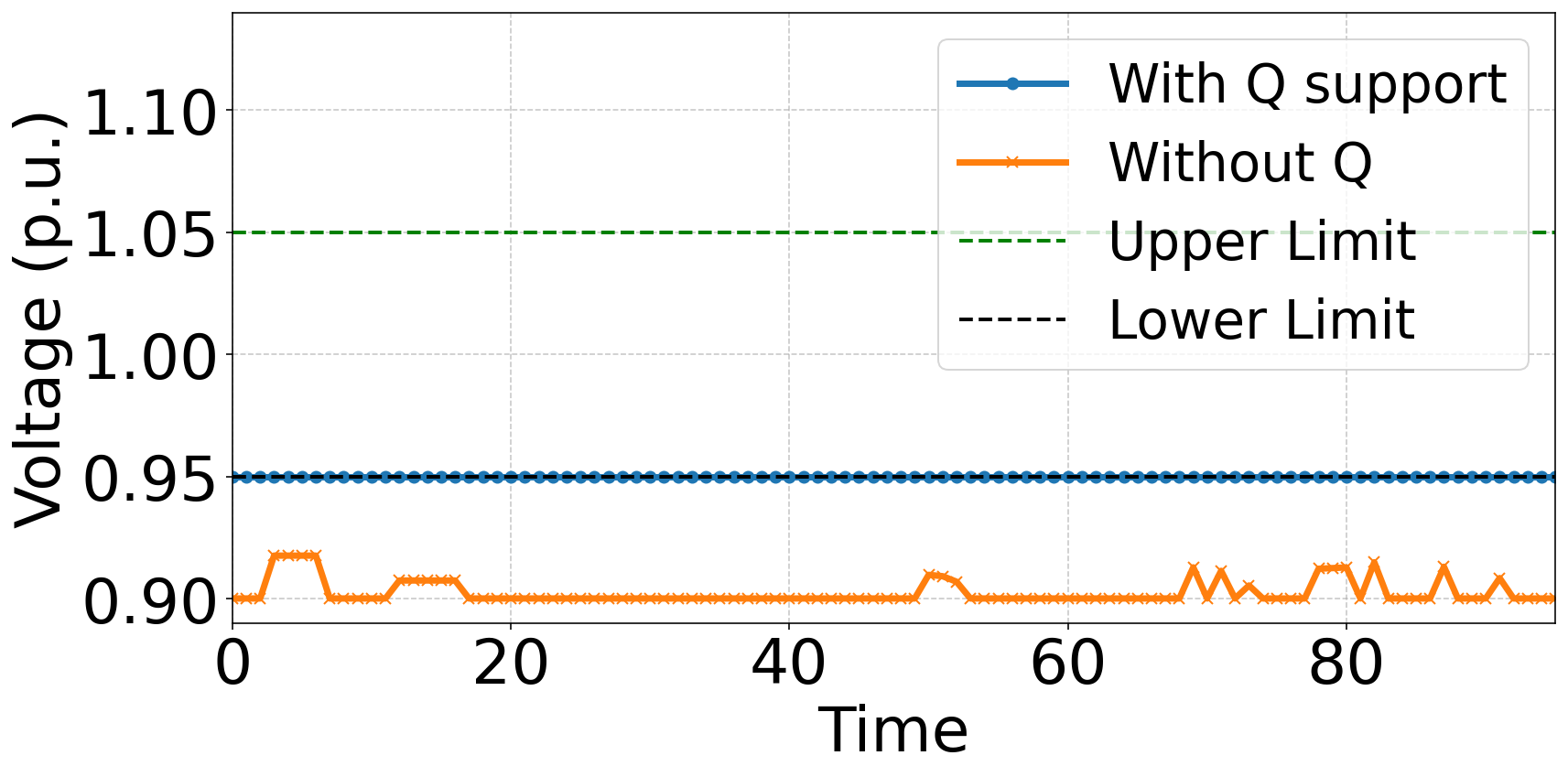}
	     \label{voltagevsperiod_Unc_17}
	}  \hspace*{-0.53 cm} 
	     \subfigure[Voltage profile at bus 33]{
	     \includegraphics[width=0.23\textwidth,height=0.11\textheight]{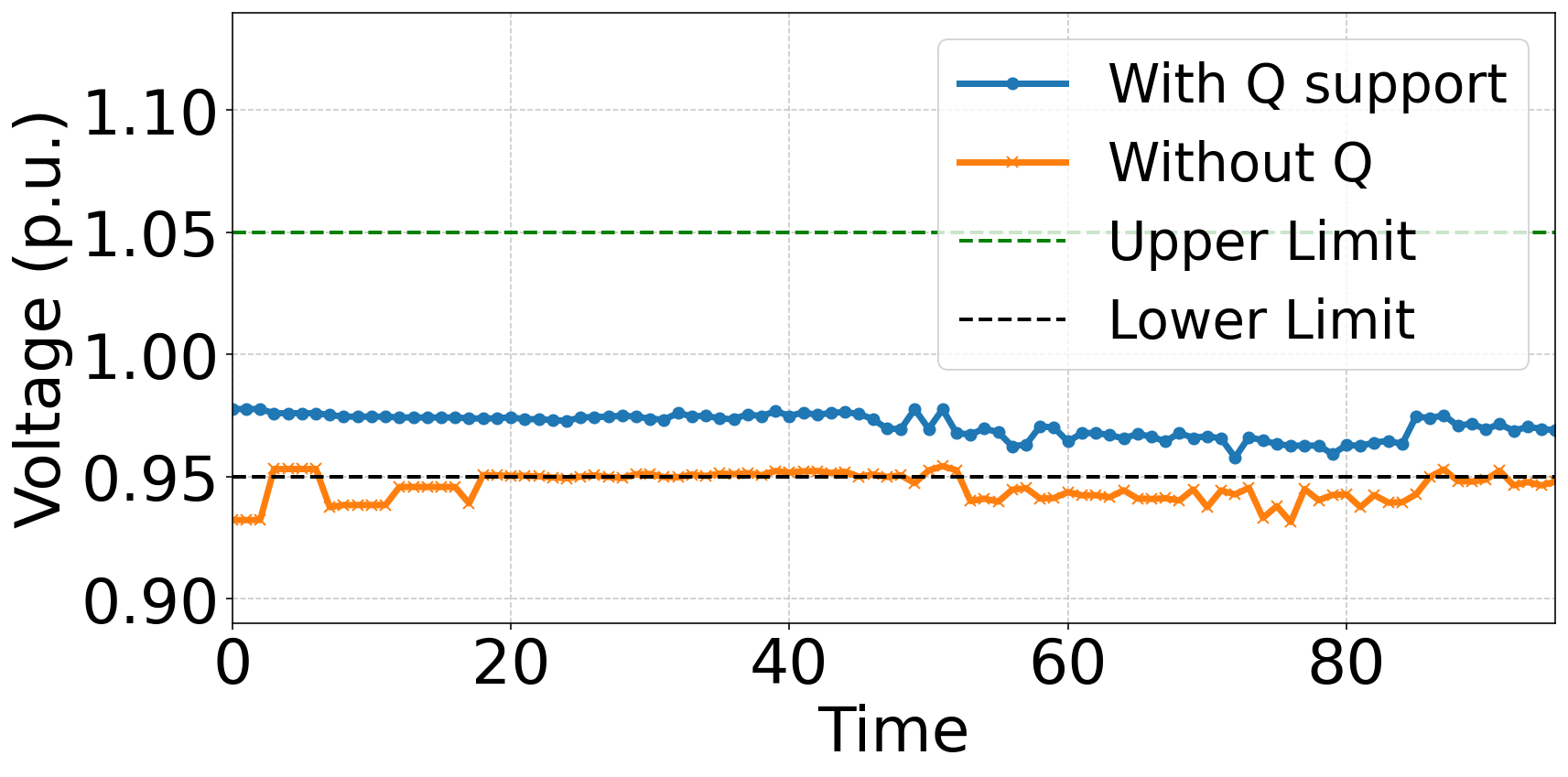}
	     \label{voltagevsperiod_Unc_32}
	} \vspace{- 0.0cm}
			\caption{Voltage magnitude across different time periods}
                \label{voltagevsperiod_Unc}
                 \vspace{- 0.0cm}
\end{figure}


\begin{table}[h!]
\caption{Computational Comparison Under Different Big-M Values}
\label{runtime}
\centering
\begin{tabular}{c c c c}
\hline
Method & Big-M Value & Runtime & Optimality Gap \\
\hline
SP & $M=100$  & $>2$ h   & Not converged \\
SP & $M=500$  & $>2$ h   & Not converged \\
SP & $M=1000$ & $>2$ h   & Not converged \\
\hline
RO & $M=100$  & 708 s   & 0.6\% \\
RO & $M=500$  & $>2$ h   & Not converged \\
RO & $M=1000$ & $>2$ h   & Not converged \\
\hline
\textbf{RC} & $M=100$  & 4 s  & 0.5\% \\
\textbf{RC} & $M=500$  & 10 s  & 0.5\% \\
\textbf{RC} & $M=1000$ & 12 s & 0.5\% \\
\hline
\end{tabular}
\end{table}
For the SP benchmark, 50 equiprobable scenarios were generated, where demand follows a truncated normal distribution within $\pm10\%$ of its forecast value and PV generation follows a scaled beta distribution within $\pm15\%$, with zero output during non-generating periods. For the RO benchmark, the uncertainty budget was set to 20. The nominal demand was defined as 25\% below the forecast value, and the maximum upward deviation was equal to the nominal demand.

For the full 96-interval scheduling horizon, the SP and RO formulations exhibited severe scalability issues and failed to converge to acceptable optimality gaps within practical solution times. In contrast, the proposed RC reformulation preserved a compact deterministic structure and achieved tractable solution times while maintaining uncertainty-aware DLMP consistency, as summarized in Table~\ref{runtime}. Representative SP and RO formulations are omitted for brevity.



To enable direct numerical comparison, the scheduling horizon was reduced to 24 hourly intervals, allowing the SP and RO benchmarks to converge. Fig.~\ref{fig:model_comparison} compares the resulting DLMP and voltage profiles. The proposed RC formulation yields higher DLMPs at buses~18 and~33 than the SP model, corresponding to an objective value of 151 compared with 137 for SP. In contrast, the RO model produces lower DLMPs and an objective value of 127, although this does not imply lower conservatism. Under the adopted RO uncertainty set, the nominal demand is defined as the minimum demand level, and only selected buses deviate upward according to the uncertainty budget. Consequently, charging-station buses may remain close to the nominal demand in the worst-case realization, reducing local congestion and voltage binding effects. This result highlights that system-level conservatism does not necessarily translate into higher DLMPs at every bus, as local prices depend on congestion and voltage conditions.

In all three models, EV reactive power support maintains the voltage profiles at the charging-station buses within the permissible limits, as shown in Fig.~\ref{fig:model_comparison}.
\begin{figure}[h!]
\centering
 \subfigure[DLMP profiles for SP]{
	     \includegraphics[width=0.225\textwidth,height=0.11\textheight]{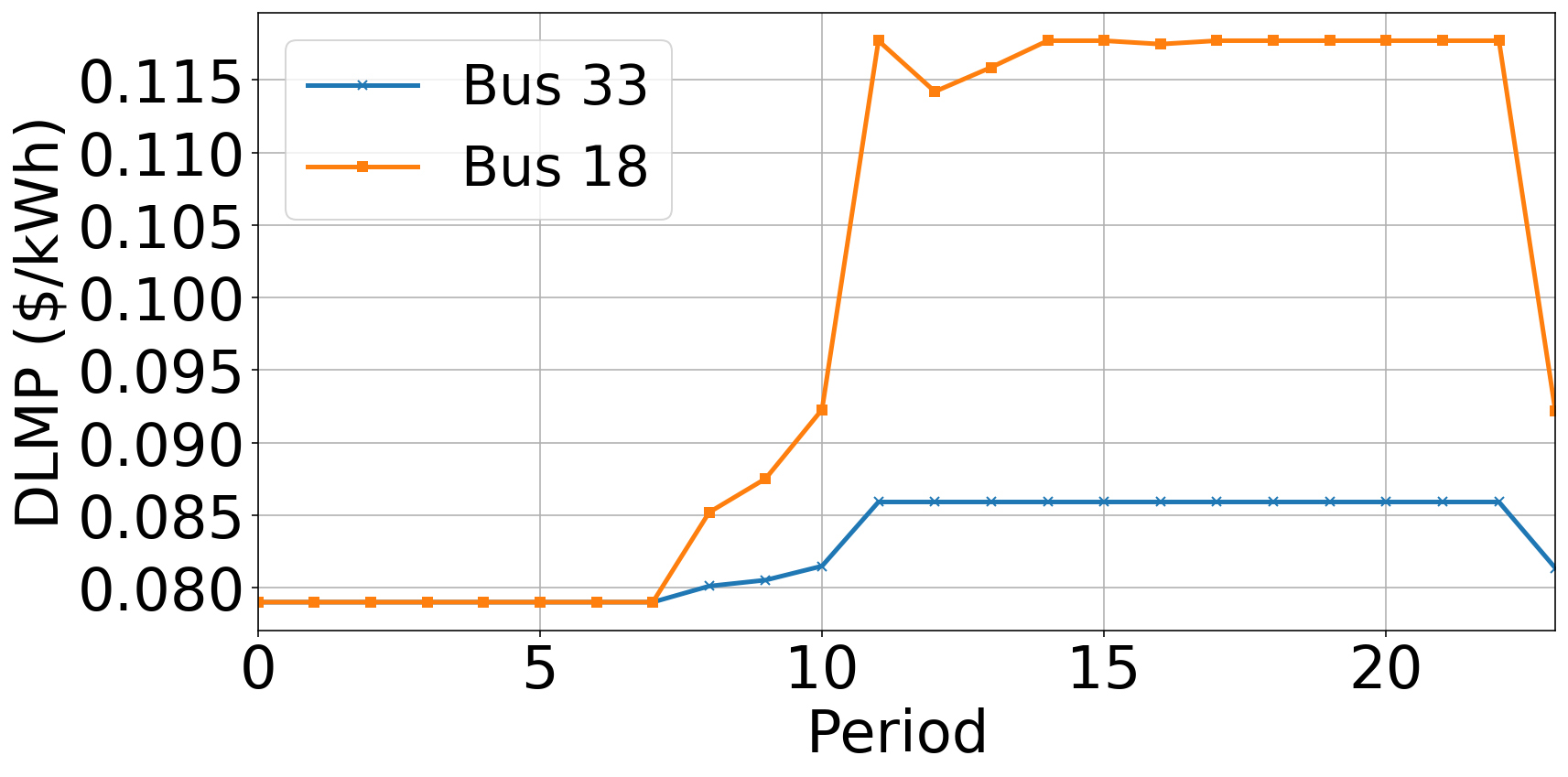}
	     \label{fig:LMP_SP}
	}   \hspace*{-0.3 cm} 
	     \subfigure[Voltage profiles for SP]{
	     \includegraphics[width=0.237\textwidth,height=0.11\textheight]{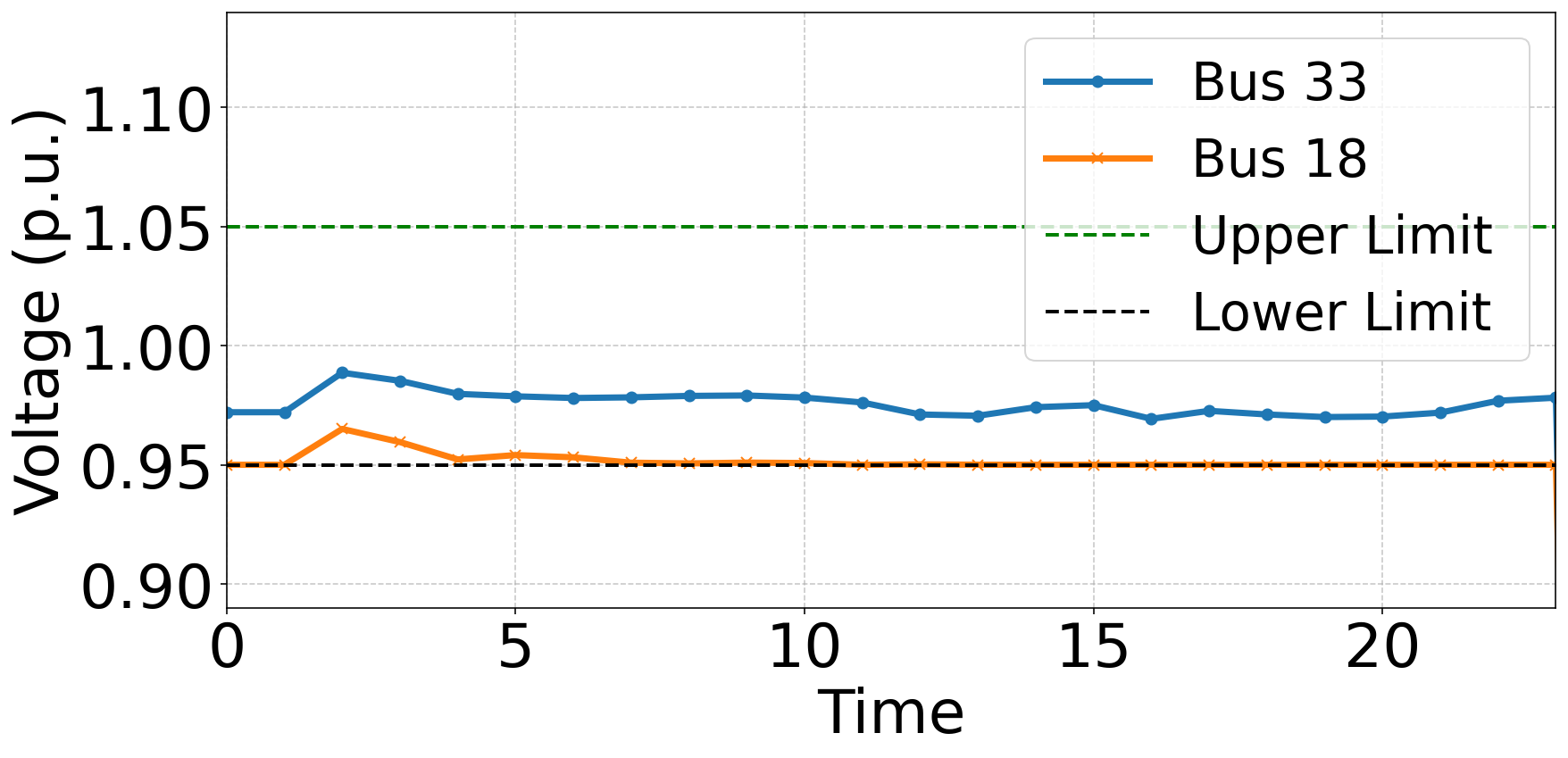}
	     \label{fig:voltagevsperiod_SP}
	} \vspace{- 0.1cm}
        \subfigure[DLMP profiles for RO]{
	     \includegraphics[width=0.237\textwidth,height=0.11\textheight]{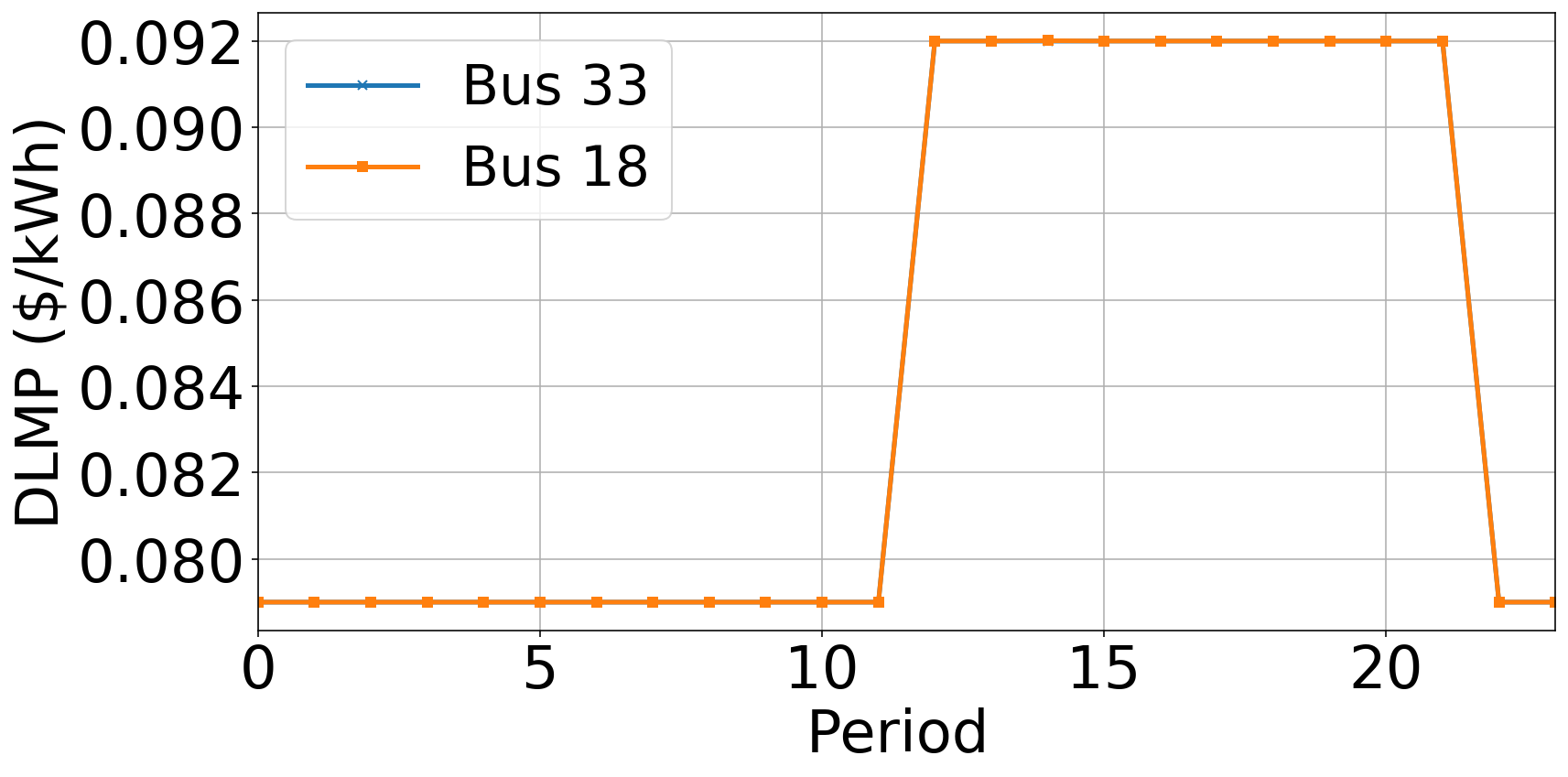}
	     \label{fig:LMPvsPriod_RO}
	} \hspace*{-0.5 cm} 
	     \subfigure[Voltage profiles for RO]{
	     \includegraphics[width=0.237\textwidth,height=0.11\textheight]{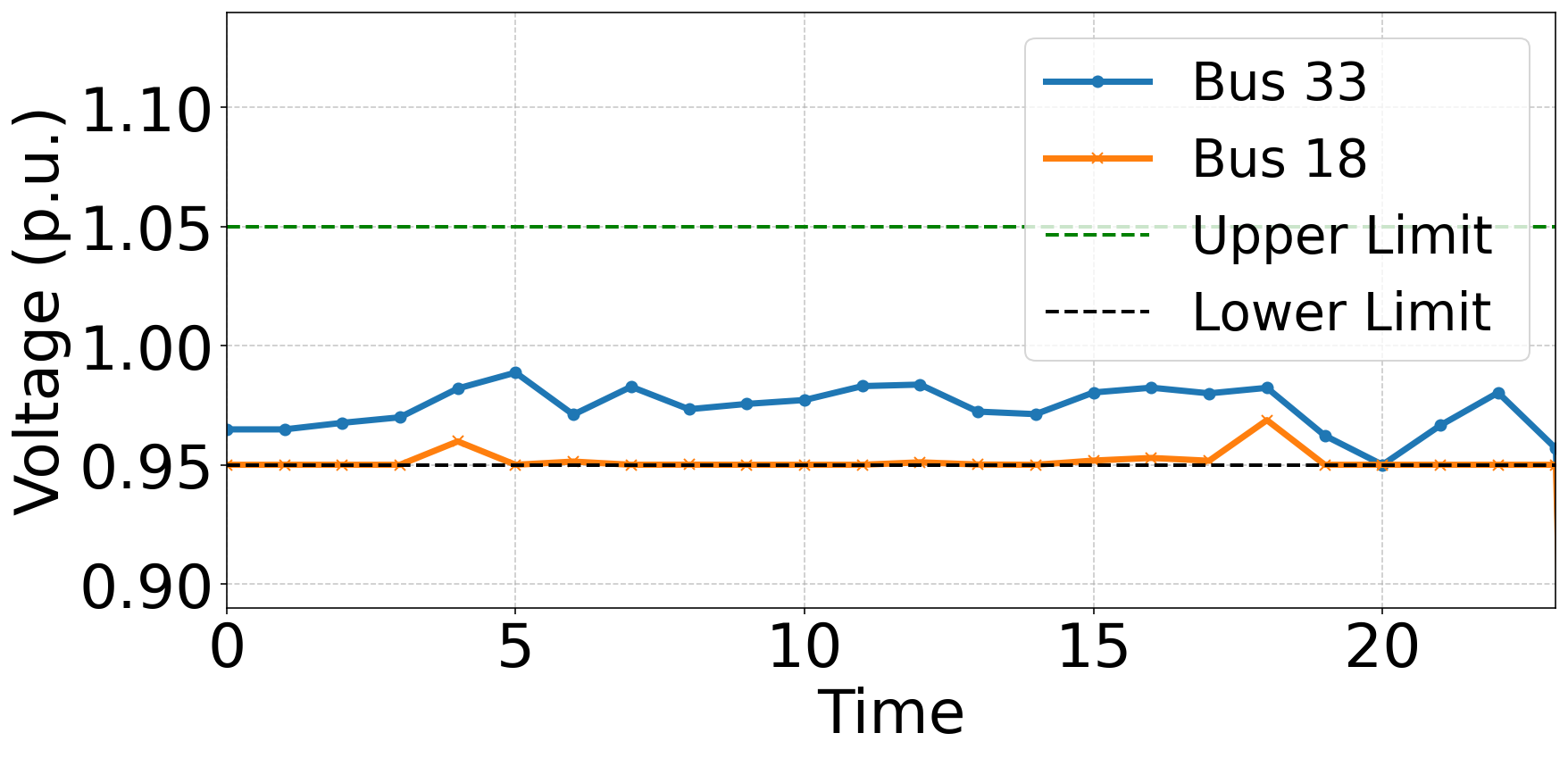}
	     \label{fig:voltagevsperiod_RO}
	} \vspace{- 0.1cm}
        \subfigure[DLMP profiles for RC]{
	     \includegraphics[width=0.242\textwidth,height=0.115\textheight]{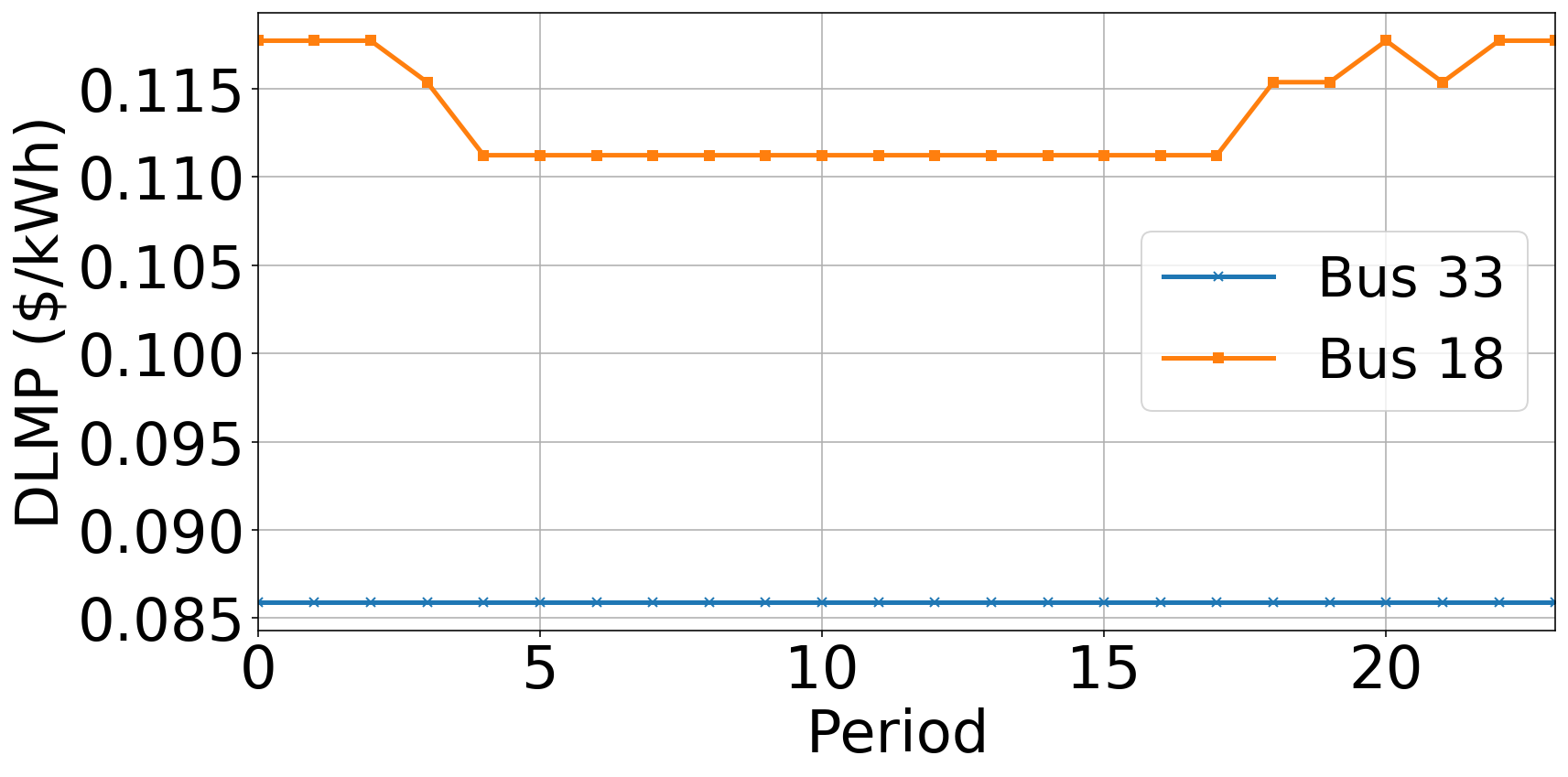}
	     \label{fig:LMPvsPriod_RC}
	}  \hspace*{-0.6 cm}  
	     \subfigure[Voltage profiles for RC]{
	     \includegraphics[width=0.235\textwidth,height=0.115\textheight]{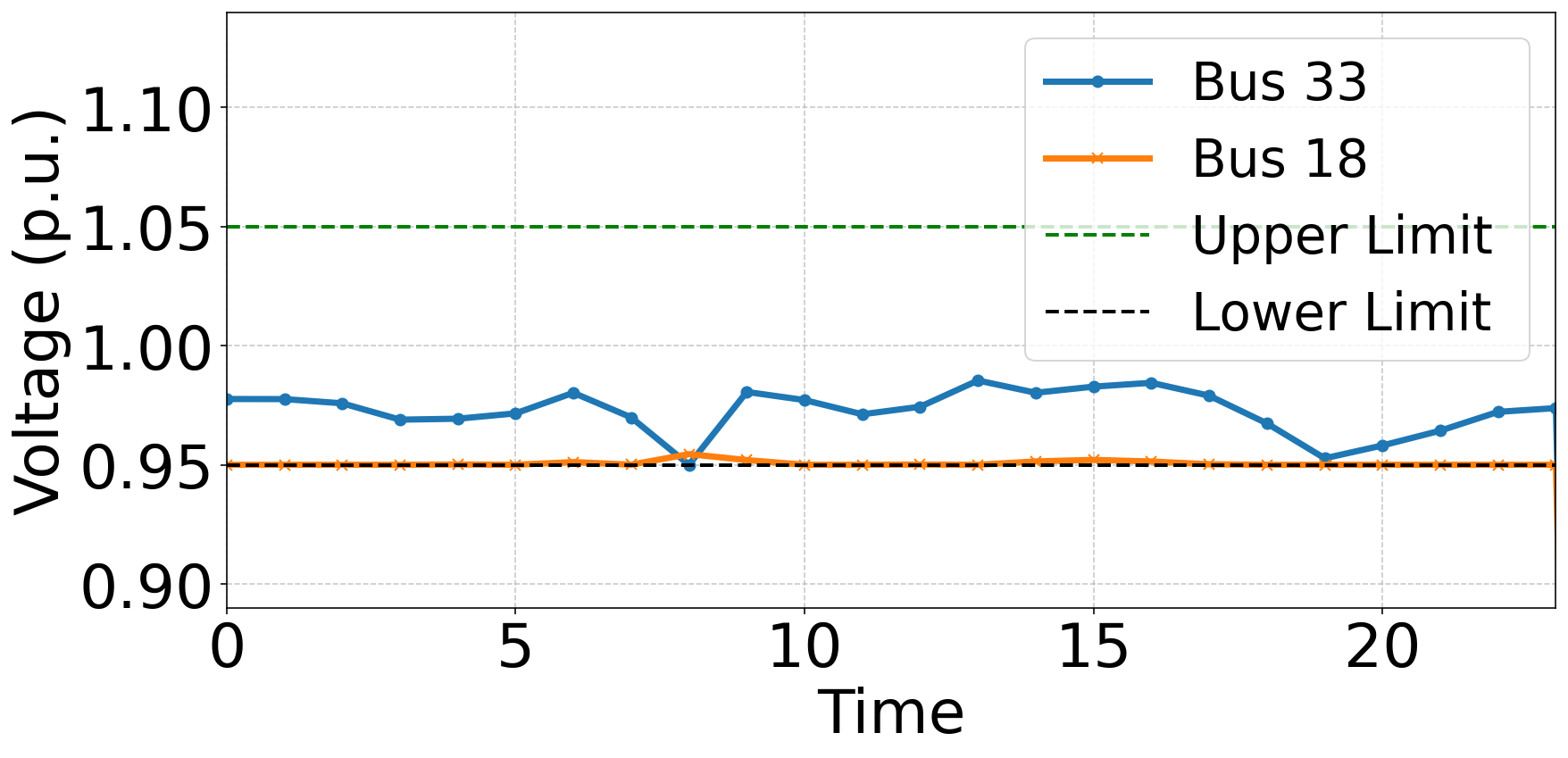}
	     \label{fig:voltagevsperiod_RC_17-32}
	}\vspace{-0.0cm}
	    \caption{DLMP and voltage profiles under SP, RO, and RC}
                \label{fig:model_comparison}
\end{figure}

\section{Conclusion}
\label{Conclusion}
This paper presented a scenario-free uncertainty-aware bilevel optimization framework for coordinated EV charging and reactive power support in distribution networks using DLMPs. The framework coordinates an EV aggregator and an energy management system through endogenous DLMPs while incorporating load and photovoltaic (PV) uncertainty via a compact normal-minus-beta robust counterpart reformulation. An exactness lemma preserves the economic interpretation of DLMPs after KKT reformulation and Big-$M$ linearization. Furthermore, EV chargers provide reactive power support through non-unity-power-factor operation to improve voltage regulation. Numerical results on the IEEE 33-bus distribution system demonstrate improved voltage security and substantially lower computational complexity than conventional stochastic and robust optimization approaches. Future work will investigate battery degradation-aware scheduling, vehicle-to-grid operation, full AC optimal power flow formulations, unbalanced distribution networks, and distributed coordination algorithms.

\bibliographystyle{IEEEtran}
\bibliography{reference}

\end{document}